\DeclareMathOperator{\skel}{skel}
\newcommand{\cA}{\ensuremath{\mathcal{A}}\xspace}
\newcommand{\cB}{\ensuremath{\mathcal{B}}\xspace}
\newcommand{\eDom}[1]{\ensuremath{\gamma^{eq}(#1)}}
\newcommand{\eDomOver}[1]{\ensuremath{\gamma^{dist}(#1)}}
\newcommand{\dom}[1]{\ensuremath{\gamma(#1)}}
\newcommand{\cov}[2]{\ensuremath{cov_{#1}(#2)}}
\newcommand{\mCov}[2]{\ensuremath{max\text{-}cov_{#1}(#2)}}
\renewcommand{\phi}{\varphi}
\renewcommand{\epsilon}{\varepsilon}
\title{K-set agreement bounds in round-based models through combinatorial topology}
\author{Adam Shimi}
\affiliation{%
    \institution{IRIT, University of Toulouse}
    \city{Toulouse}
    \country{France}
}
\email{adam.shimi@irit.fr}
\author{Armando Casta\~{n}eda}
\affiliation{%
    \institution{UNAM}
    \city{Mexico City}
    \country{Mexico}
}
\email{armando.castaneda@im.unam.m}
\keywords{Distributed Computability, Set-agreeement, Round-based models, Combinatorial Topology, Lower bounds, Upper Bounds}
\begin{document}

\begin{abstract}

  Round-based models are very common message-passing models; combinatorial topology applied
  to distributed computing provides sweeping results like general lower bounds. We combine
  both to study the computability of $k$-set agreement.

  Among all the possible round-based models, we consider oblivious ones, where the
  constraints are given only round per round by a set of allowed graphs.
  And among oblivious models, we focus on closed-above ones,
  that is models where the set of possible graphs contains all graphs with more
  edges than some starting graphs. These capture intuitively
  the underlying structure required by some communication model, like containing a ring.

  We then derive lower bounds and upper bounds in one round for $k$-set agreement,
  such that these bounds are proved using combinatorial topology but stated only
  in terms of graph properties. These bounds extend to multiple rounds when limiting
  our algorithms to be oblivious -- recalling only pairs of processes and initial value,
  not who send what and when.

\end{abstract}

\maketitle

\section{Introduction}
\label{sec:intro}

  \subsection{Motivation}
  \label{subsec:motivation}

    Rounds structure many models of distributed computing: they simplify algorithms,
    capture the distributed equivalent of time complexity~\cite{FraigniaudComplexity},
    and underly many fault-tolerant algorithms, like Paxos~\cite{LamportPaxos}.
    A recent trend, with parallel results by Charron-Bost and Schiper~\cite{CharronBostHO}
    on one hand, and Afek and Gafni~\cite{AfekEquiv} and Raynal and Stainer~\cite{RaynalEquiv}
    on the other hand, is using this concept of round for formalizing
    many different models within a common framework.
    But the techniques used for proving results in these models tend to be ad-hoc,
    very specific to some model or setting. What is required going forward is a
    general approach to proving impossibility results and bounds on round-based
    models.

    Actually, there is at least one example of a general mathematical technique
    used in this context: the characterization of consensus solvability
    through point-set topology by Nowak et al.~\cite{Nowak}.
    We propose what might be seen as an extension to higher dimension of
    this intuition, by applying combinatorial topology (instead of point-set
    topology) to bear on $k$-set agreement (instead of just consensus).

    Combinatorial topology abstracts the reasoning around knowledge and
    indistinguishability behind many impossibility results
    in distributed computing.
    It thus provide generic mathematical tools and methods for deriving such
    results~\cite{HerlihyBook}. Moreover, this approach is the only one that
    managed to prove impossibility results and characterization of solvability
    of the $k$-set agreement~\cite{ChaudhuriKSet}, our focus problem.

    Concretely, we look at closed-above round-models,
    that is models where constraints happens
    round per round, and the set of communication graphs allowed
    is the closure-above of a
    set of graphs. These models capture some safety properties,
    where we require some underlying structure in communication, like having
    an underlying star, ring or tree. This is a strict generalization of the
    models with a fixed communication graph considered by
    Casta{\~n}eda et al.~\cite{Sirocco}.

    For our models, we derive upper bound
    and lower bounds on the $k$ for which $k$-set agreement is solvable.
    And although the proofs of the bounds use combinatorial topology, they
    are stated in terms of variants of the domination number, a well-known
    and used combinatorial number on graphs.

  \subsection{Overview}
  \label{subsec:overview}

    \begin{itemize}
      \item We start by defining closed-above models in Section~\ref{sec:prelim}.
      \item Then we give various upper bounds for $k$-set agreement
        in one round on those models in Section~\ref{sec:upper}. These have the advantage
        of not requiring any combinatorial topology.
      \item Next, we introduce in Section~\ref{sec:topo}
        the combinatorial topology necessary for
        our lower bounds, both the basic definitions and our main
        technical lemma.
      \item We then go to lower bounds on round-based models for
        $k$-set agreement in one round in Section~\ref{sec:lower}. Recall that
        these bounds use combinatorial topology, but are stated in terms of graph
        properties.
      \item Finally, Section~\ref{sec:multiple} generalize both
        upper and lower bounds to the case of multiple rounds.
    \end{itemize}

  \subsection{Related Works}
  \label{subsec:related}

    \paragraph{Round-based models} The idea of using rounds for abstracting
      many different models is classical in message-passing. This
      includes the synchronous adversary models of Afek and Gafni~\cite{AfekEquiv}
      and Raynal and Stainer~\cite{RaynalEquiv}; the Heard-Of model of Charron-Bost
      and Schiper~\cite{CharronBostHO}; and the dynamic networks of
      Kuhn et al~\cite{KuhnComputation}.

      Rounds are also used for building
      a distributed theory of time complexity~\cite{FraigniaudComplexity}
      and for structuring fault-tolerant algorithms like
      Paxos~\cite{LamportPaxos}.

      Previous work on the solvability of consensus and $k$-set agreement
      include the characterization of consensus solvability for oblivious
      round-based models of Coulouma et al.~\cite{CouloumaConsensus},
      the failure-detector-based approach of Jeanneau et al.~\cite{FDKSet},
      and the focus on graceful degradation in algorihtms for $k$-set agreement
      of Biely et al.~\cite{BielyKSet}.

    \paragraph{Combinatorial Topology}
      Combinatorial Topology was first applied to the problem
      of $k$-set agreement in wait-free shared memory by
      Herlihy and Shavit~\cite{HerlihyTopology}, Saks and
      Zaharoglou~\cite{SaksTopology} and
      Borowsky and Gafni~\cite{BorowskyTopology}.

      Beyond these first forays, many other results got proved through
      combinatorial topology. Among others, we can cite
      the lower bounds for renaming by Casta{\~{n}}eda and
      Rajsbaum~\cite{CastanedaRenaming} and
      the derivation of lower-bounds for message-passing
      by Herlihy and Rasjbaum~\cite{HerlihyShellable};
      There is even a result by Alistarh et al.~\cite{AlistarhExtension}
      showing that traditional proof techniques (dubed extension-based proofs)
      cannot prove the impossibility of $k$-set agreement in specific shared-memory
      models, whereas techniques from combinatorial topology can.

      For a full treatment of combinatorial topology applied
      to distributed computing, see Herlihy et al.~\cite{HerlihyBook}.

    \paragraph{Combination of Topology and Round-based models}
      Two papers at least applied topology (combinatorial or not) to
      general round-based models in order to study agreement problems:
      Godard and Perdereau~\cite{GodardKSet} used combinatorial topology
      to study $k$-set agreement in models with omission failures; and
      Nowak et al.~\cite{Nowak} characterization of consensus
      for general round-based models (not necessarily oblivious) using
      point-set topology.

\section{Definitions}
\label{sec:prelim}

  \subsection{Communication models}
  \label{subsec:comModels}

    One common feature of many models of distributed computation is the
    notion of rounds, or layers. Formally, rounds are communication-closed
    as defined by Elrad and Francez~\cite{Elrad}: at each round, a process $p$
    only takes into account the messages (or information) sent by other processes
    at this same round.

    Traditionally, rounds are thought of as synchronous: synchrony indeed provides
    a natural way to implement them. But asynchronous rounds also exist,
    both in message-passing~\cite{CharronBostHO} and in
    shared-memory~\cite{Borowsky,HerlihyAsync}.

    Here we abstract away all implementation details, and consider a model
    with rounds, parameterized with the allowed sequences of communication graphs
    -- directed graphs where each node is a process and each arrow correspond to a
    delivered message to the destination from the source. There are no crashes,
    just a specification of which message can be received at which round.
    This abstracts the Heard-Of model~\cite{CharronBostHO}, synchronous
    message adversaries~\cite{AfekEquiv,RaynalEquiv} and dynamic
    networks~\cite{KuhnComputation,KuhnCoordinated}, as
    well as all other models relying only on the properties of rounds.

    We fix $\Pi = \{p_1,...,p_n\}$ as our set of $n$ processes for the rest of the paper.

    \begin{definition}[Communication model]
        Let $Graphs_{\Pi}$ be the set of graphs. Then
        $Com \subseteq (Graphs_{\Pi})^{\omega}$ is a \textbf{communication model}.
    \end{definition}

    Any set of infinite sequences of graphs defines a model. In order to
    make models more manageable, we focus on a restricted form,
    where the graph for each round is decided independently
    of the others. The model is thus entirely characterized by the set of allowed
    graphs. We call these communication models \textit{oblivious}, following
    Coulouma et al~\cite{CouloumaConsensus}.

    \begin{definition}[Oblivious communication models]
        Let $Com$ be a communication model. Then $Com$ is \textbf{oblivious}
        $\triangleq \exists S \subseteq Graphs_{\Pi}: Com = S^{\omega}$.
    \end{definition}

    Intuitively, oblivous models capture safety properties: bad things that must not
    happen. Or equivalently, good things that must happen at every round. Usually,
    these good properties are related to connectivity, like containing a cycle or
    a spanning tree. Since such a property tends to be invariant
    when more messages are sent, we can look at oblivious models
    defined by a set of subgraphs.

    \begin{definition}[Closed-above communication models]
        Let $Com$ be an oblivious communication model. Then $Com$ is
        \textbf{closed-above} $\triangleq \exists S \subseteq Graphs_{\Pi}:
        Com = (\bigcup\limits_{G \in S} \uparrow G)^{\omega}$, where
        $\uparrow G \triangleq \{ H \mid V(H) = V(G) \land E(H) \supseteq E(G)\}$.

        We call the graphs in $S$ the \textbf{generators} of $Com$.

        If $S$ is a singleton, then $Com$ is \textbf{simple closed-above}.
    \end{definition}

    Classical examples of closed-above models are the non-empty kernel
    predicate (only graphs where at least one process broadcasts) and the
    non-split predicate (only graphs where each pair of processes hears
    from a common process), used notably by Charron-Bost et
    al.~\cite{CharronBostApprox} for characterizing the solvability
    of \textit{approximate consensus} (the variant of consensus where
    the decided value should be less than $\epsilon$ apart, where $\epsilon > 0$
    is fixed beforehand). Another closed-above model is the one satisfying the
    tournament property of Afek and Gafni~\cite{AfekEquiv}, which they show
    is equivalent to wait-free read-write shared memory.

    One example of an oblivous model which is \textbf{not} closed-above
    is the one generated by all graphs containing a cycle, except the clique.
    More generally, the closure-above forces us to have all graphs with more
    edges than our generators.

    Nonetheless, closed-above models capture a fundamental intuition behind
    distributed computing models: specifying what should not happen. They also
    have a good tradeoff between expressivity and simplicity,
    since the "combinatorial data" used to build them is contained in
    a small number of graphs.
    Finally, the patterns expected by safety properties tend to be independent of
    which processes play which roles -- what matters is the existence of a ring or spanning
    tree, not who is where on it.

    We call such closed-above models \textit{symmetric}.

    \begin{definition}[Symmetric models]
        Let $Com$ be a closed-above model, and $S$ be the set of
        graphs generating it. Then $Com$ is \textbf{symmetric}
        $\triangleq S = Sym(S)$, where $Sym(S) = \{ \pi(G) \mid
        G \in S \land \pi : \Pi \to \Pi$ a permutation on $\Pi\}$.
    \end{definition}

    In the rest of the paper, we will limit ourselves to closed-above models, both
    symmetric and not.

  \subsection{Oblivious algorithms}
  \label{subsec:algos}

    Because most applications of combinatorial topology to distributed computing
    aim towards impossibility results, the traditional algorithms considered
    err on the side of power: full information protocols, which exchange at each
    round the view of everything ever heard by the process. For example, after a couple
    of rounds, views will contain nested sets of views, themselves
    containing views, recursively until the initial values.

    In contrast, we focus on oblivious algorithms. That is, we limit each process
    to remember only the initial values it knows, not who sent them or when. This
    amounts to a function from $\Pi$ to the set of initial values
    (with a $\bot$ when the value is not known).
    In turn, these algorithms lose the ability to trace the path of the value.

    We can view oblivious algorithms as full-information
    protocol whose decision map (the function from final view to decision
    value) depends only on the set of known pairs (process,initial value).
    The full-information protocol might still be used for deciding when
    to apply the decision map, but this map loses everything except the
    known pairs. That is, the decision map is constrained to decide similarly
    in situations where it received the same values, even when they were from
    different processes.

    \begin{definition}[Oblivious algorithm]
      Let $\mathcal{A}$ be a full-information protocol, with decision
      map $\delta$. Then $\mathcal{A}$ is an \textbf{oblivious algorithm}
      $\triangleq \forall v$ a view : $\delta(v) = \delta(flat(v))$,\\
      where $flat(v) = \bigcup\limits_{(p,v_p) \in v} flat((p,v_p))$\\
      and $flat((p,v_p)) =
      \left\{
      \begin{array}{ll}
        \{(p,v_p)\} & \text{if } v_p \text{ is a singleton from } V_{in}\\
        flat(v_p) & \text{otherwise}\\
      \end{array}
      \right.$
    \end{definition}

\section{One round upper bounds: a start without topology}
\label{sec:upper}

    Although lower bounds are our targets, they require upper bounds to gauge their
    strength. We thus start with upper bounds on $k$-set agreement~\cite{ChaudhuriKSet}
    for closed-above models.
    Another advantage of starting with our upper bounds is that they rely on
    concrete algorithms, and allow us to introduce generalizations of the classical
    domination number that will be used for our lower bounds.

    Lastly, we also start with bounds for the one round case in this section
    and the next one. Bounds for multiple rounds depend on these one round bounds.

    These bounds follow from a very simple algorithm for solving
    $k$-set agreement. We assume the set of initial values is totally
    ordered. Then everybody sends its initial value for one round,
    and decide the minimum it received.

    \subsection{Simple closed-above models: almost too easy}
    \label{subsec:upperSimple}

        Recall that the domination number of a graph is the size of its smallest
        dominating set, that is the size of the smallest set of nodes whose set
        of outgoing neighbors is $\Pi$. Note that the outgoing neighbors of
        a set $S \subseteq \Pi$ contains $S$ -- that is, we assume self-loop.

        \begin{definition}[Domination number]
            Let $G$ be a graph. Then its \textbf{domination number} $\dom{G}
            \triangleq min \{i \in [1,n] \mid \exists P \subseteq \Pi: |P| = i
            \land \bigcup\limits_{p \in P} Out_G(p) = \Pi \}$.
        \end{definition}

        Because the simple closed-above model generated by $G$ only allows graphs
        containing $G$, their domination number is at most $\dom{G}$. This entails
        a very simple upper bound on $k$-set agreement.

        \begin{theorem}[Upper bound on $k$-set agreement by $\dom{G}$]
            \label{upperDomOne}
            Let $G$ be a graph. Then $\dom{G}$-set agreement
            is solvable in one round on the simple closed-above model generated by $G$.
        \end{theorem}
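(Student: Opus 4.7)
The plan is to exhibit a one-round algorithm whose collective decisions take at most $\dom{G}$ distinct values. At algorithm-design time I would fix once and for all a minimum dominating set $D \subseteq \Pi$ of $G$ with $|D| = \dom{G}$; since the algorithm is tailored to the simple closed-above model generated by $G$, it may reference $G$ (and hence $D$) as a constant. In the single round, each process $p_i$ broadcasts its initial value $v_i$; then, from the set of received (process, initial value) pairs, $p_i$ decides
\[
d_i \;=\; \min\{v_j : p_j \in D \text{ and } p_i \text{ received } v_j\}.
\]

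I would first argue well-definedness. Since the model is simple closed-above from $G$, every executed communication graph $H$ satisfies $E(H) \supseteq E(G)$; because $D$ dominates $\Pi$ via out-neighbours in $G$, every $p_i$ has some $p_j \in D$ with $(p_j,p_i) \in E(G) \subseteq E(H)$, so $p_i$ does receive $v_j$ and the minimum is taken over a non-empty set. Validity and $\dom{G}$-set agreement then follow immediately: every $d_i$ equals some $v_j$ with $p_j \in D$, so decisions are initial values and collectively lie in the fixed pool $\{v_j : p_j \in D\}$, a set of cardinality at most $\dom{G}$. Obliviousness also holds, since the decision map is purely a function of the received pair set (restrict to pairs whose process lies in $D$, return the minimum value) with $D$ baked in as a constant.

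There is no real obstacle; the only modelling choice I would want to flag is that the minimum must be restricted to values originating from the fixed dominating set $D$ rather than taken over all received values. Allowing supergraphs $H \supsetneq G$ could otherwise let small initial values from non-dominators leak to some but not all processes and thereby inflate the number of distinct decisions past $\dom{G}$; filtering through $D$ confines every decision to the same $\dom{G}$-sized pool regardless of which extra edges appear in $H$, which is exactly what the $\dom{G}$-set agreement specification demands.
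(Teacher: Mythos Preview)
Your proof is correct and matches the paper's approach: fix a minimum dominating set $D$ of $G$, have each process decide the minimum value it received from a member of $D$, and observe that domination plus $E(H) \supseteq E(G)$ guarantees every process hears some such value. Your write-up is in fact more careful than the paper's, explicitly noting why the minimum must be restricted to values originating in $D$ rather than taken over all received values.
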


        \begin{proof}
            The algorithm is just slightly different from the one stated at
            the start of the section: after one round, each process
            decides the minimum value of the ones of a fixed minimum dominating set
            of $G$. Since $G$ is known, this minimum dominating set can be computed
            beforehand. And because it is a dominating set, every process receives
            at least one value from it, so every process can decide.

            Finally, since the minimum dominating set has at most \dom{G} distinct values,
            at most \dom{G} values are decided, and thus our algorithm solves
            \dom{G}-set agreement.
        \end{proof}

        From Casta{\~n}eda et al.~\cite[Thm 5.1]{Sirocco}, we know this bound is
        tight: the oblivious model with a single graph $G$ cannot solve $k$-set agreement
        in one round for $k < \dom{G}$. Hence the weaker simple closed-above model
        generated by $G$ cannot solve $k$-set agreement in one round for
        $k < \dom{G}$.

        Still, simple closed-above models are somewhat artificial, as
        can be seen in the proof: we know exactly the subgraph that must
        be contained in the actual communication graph. A more realistic
        take requires to spread the uncertainty to the underlying subgraph;
        we thus look next at general closed-above models.

    \subsection{General closed-above models: tweaking of upper bounds}
    \label{subsec:upperGeneral}

        For general closed-above models, we must deal with a set
        of possible underlying subgraphs. This makes our previous
        approach inapplicable: we cannot hardcode a dominating
        set because we don't know the underlying subgraph for sure.

        This new issue motivates the definition of a weakening of
        the domination number: the equal-domination number of a set
        of graphs. Intuitively, any set of that much process is a
        dominating set in all the graphs considered.

        \begin{definition}[Equal Domination number of a set of graphs]
            Let $S$ be a set of graphs. Then its \textbf{equal domination
            number} $\eDom{S} \triangleq \max\limits_{G \in S}
            \eDom{G}$, where $\eDom{G} = min \{ i \in [1,n] \mid
            \forall P \subseteq \Pi: |P| = i \implies
            \bigcup\limits_{p \in P} Out_G(p) = \Pi \}$.
        \end{definition}

        \begin{theorem}[Upper bound on $k$-set agreement by \eDom{S}
                        for general closed-above models]
            \label{upperEdomOne}
            Let $S$ be a set of graphs.
            Then \eDom{S}-set agreement is solvable on
            the closed-above model generated by $S$.
        \end{theorem}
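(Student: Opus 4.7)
My plan is to adapt the algorithm of Theorem~\ref{upperDomOne} by using a pre-agreed set $P \subseteq \Pi$ of size \eDom{S} instead of a fixed minimum dominating set of a single known generator. Since $S$ is part of the model description, every process can compute \eDom{S} and fix the same $P$ beforehand. The algorithm is then: broadcast your initial value for one round, then decide the minimum value received from a process in $P$.

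The first thing I would verify is that $P$ is a dominating set in the actual communication graph $H$ of the (unique) round, so that every process receives at least one value from a member of $P$ and can decide. By definition of the closed-above model generated by $S$, there is some $G \in S$ with $E(H) \supseteq E(G)$. By the choice $|P| = \eDom{S} \geq \eDom{G}$, and since the definition of \eDom{G} guarantees that \emph{every} set of size \eDom{G} dominates $G$, a short monotonicity argument (any superset of a dominating set is still dominating) gives that $P$ dominates $G$; then $Out_G(p) \subseteq Out_H(p)$ for every $p$ propagates domination up to $H$.

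For the $k$-set agreement condition, I would observe that every decision is some $v_p$ with $p \in P$, so across all processes there are at most $|P| = \eDom{S}$ distinct decided values. Validity (the decision is an initial value of some process) and termination (in one round) are immediate. I do not foresee a real obstacle here: the statement is essentially a quantitative upgrade of Theorem~\ref{upperDomOne}, trading the explicit minimum dominating set of a known generator for \emph{any} fixed set large enough to dominate every possible generator. The one step that calls for care is the monotonicity passage from ``$P$ dominates $G$'' to ``$P$ dominates $H$'', but this is immediate once one unfolds the definition of $\uparrow G$.
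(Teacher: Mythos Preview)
Your argument is correct. The one-round algorithm you propose---fix any $P\subseteq\Pi$ with $|P|=\eDom{S}$ in advance and decide the minimum value received from a member of $P$---works exactly as you outline, and your monotonicity steps (from $|P|\geq\eDom{G}$ to ``$P$ dominates $G$'', and from $G$ to any $H\in\uparrow G$) are the right justifications.

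The paper's proof uses the same underlying fact (every $\eDom{S}$-subset of $\Pi$ dominates every graph in the model) but packages it slightly differently: its algorithm is the plain ``decide the minimum value received'' from the start of Section~\ref{sec:upper}, with no distinguished set hard-coded. The set $P$ appears only in the \emph{analysis}, as the $\eDom{S}$ processes holding the smallest initial values; since that set dominates the round's graph, every process hears at least one of those values, and taking the global minimum lands among them. The trade-off is that the paper's algorithm is more uniform (it needs no knowledge of $S$ or of a designated $P$), whereas yours mirrors Theorem~\ref{upperDomOne} more closely and makes the domination argument fully explicit. Both routes are equally short and prove the same bound.
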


        \begin{proof}
            Let $P$ be a set of \eDom{S} processes with the smallest initial values.
            They have thus at most \eDom{S} distinct initial values. By
            definition of \eDom{S}, $P$ dominates every graph in $S$, and
            thus every graph in the closed-above model generated by $S$.

            Thus taking the minimum after one round will result in deciding
            one of those initial values, and thus one of at most \eDom{S}
            values. We conclude that our algorithm solves \eDom{S}-set
            agreement after one round on the closed-above model generated
            by $S$.
        \end{proof}

        Since the equal-domination number is independent of which process does what,
        it is the same for any permutation of the graph. This entails an upper
        bound on symmetric models as a corollary.

        \begin{corollary}
            Let $S$ be a set of graphs.
            Then \eDom{S}-set agreement is solvable on
            the closed-above model generated by $Sym(S)$.
        \end{corollary}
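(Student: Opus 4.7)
The plan is to reduce the corollary to a direct application of Theorem~\ref{upperEdomOne} with the enlarged generator set $Sym(S)$, combined with the observation that the equal-domination number is invariant under permutations of $\Pi$. This is essentially a bookkeeping corollary, so the bulk of the work is in cleanly stating the permutation-invariance.

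First I would show that $\eDom{\pi(G)} = \eDom{G}$ for every graph $G$ and every permutation $\pi$ on $\Pi$. The key fact is that $Out_{\pi(G)}(p) = \pi(Out_G(\pi^{-1}(p)))$, which is just the definition of applying a permutation to a graph. Hence a subset $P \subseteq \Pi$ of size $i$ satisfies $\bigcup_{p \in P} Out_{\pi(G)}(p) = \Pi$ if and only if $\pi^{-1}(P)$ (also of size $i$, since $\pi$ is a bijection) is a dominating set in $G$. The universal quantifier in the definition of $\eDom{\cdot}$ -- "every subset of size $i$ is dominating" -- therefore transfers between $G$ and $\pi(G)$, and so does the minimum $i$ for which the quantifier holds.

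Using this, I get $\eDom{Sym(S)} = \max_{G \in Sym(S)} \eDom{G} = \max_{G \in S} \eDom{G} = \eDom{S}$: every graph in $Sym(S)$ is a permuted copy of some graph in $S$, and contributes the same value to the maximum. Applying Theorem~\ref{upperEdomOne} to the generator set $Sym(S)$ then shows that $\eDom{Sym(S)}$-set agreement is solvable on the closed-above model generated by $Sym(S)$, which by the equality just established is precisely $\eDom{S}$-set agreement.

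I do not anticipate any substantive obstacle. The only subtlety is to verify that $Sym(S)$ is indeed a set of graphs on the same process set $\Pi$, so that Theorem~\ref{upperEdomOne} applies verbatim; this is immediate from the definition of $Sym(S)$, which permutes the labeling on the fixed vertex set $\Pi$.
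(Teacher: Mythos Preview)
Your proposal is correct and follows exactly the approach the paper intends: the paper simply remarks that the equal-domination number is invariant under permutations and hence $\eDom{Sym(S)} = \eDom{S}$, so Theorem~\ref{upperEdomOne} applied to $Sym(S)$ gives the corollary. Your write-up is just a more explicit version of that one-line justification.
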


        Now, the natural question to ask is whether we can improve this bound.
        Or equivalently, is it tight?

        The answer depends on the graphs. To see it, let us look at
        another combinatorial number : covering numbers.
        Given fewer processes than the equal-domination number
        of the graph, they do not always form a dominating set. Nonetheless,
        they might still get heard by some minimum number of processes.
        We call such minimums the covering numbers of the graph: the $i$-th covering
        number of $G$ is, given any set of $i$ processes, the minimum number of processes
        hearing this set in $G$.

        \begin{definition}[Covering numbers of a set of graphs]
            Let $S$ be a set of graphs. Then $\forall i < \eDom{S}$,
            its $i$-th covering number $\cov{i}{S} \triangleq \min\limits_{G \in S} \cov{i}{G}$,
            where $\cov{i}{G} \triangleq
            \min\limits_{\substack{P \subseteq \Pi\\ |P| = i}}
            |(\bigcup\limits_{p \in P} Out_G(p))|$.
        \end{definition}

        These numbers capture the ability of a set of processes to disseminate their
        values in the graph. If we take the $i$ processes with the smallest initial values,
        we can be sure that at least $\cov{i}{S}$ processes will hear, and thus
        choose one of these. This then gives a solution to
        $(i + (n-\cov{i}{S}))$-set agreement in one round.

        \begin{theorem}[Upper bounds on $k$-set agreement by covering numbers
                        for general closed-above models]
            \label{upperCovOne}
            Let $S$ be a set of graphs. Then $\forall i \in [1,
            \eDom{S}[: (i+ (n- \cov{i}{S}))$-set agreement is solvable on the oblivious
            closed-above model generated by $S$.
        \end{theorem}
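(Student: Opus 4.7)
My plan is to run the same minimum-value algorithm as in Theorem~\ref{upperEdomOne} and bound the number of distinct decided values by carefully tracking which processes are covered by the $i$ smallest initial values. Fix $i \in [1,\eDom{S}[$ and assume a total order on the initial values. Every process sends its initial value in one round, then decides the minimum value it has received (including its own, via the self-loop).

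Let $P$ be the set of $i$ processes holding the $i$ smallest initial values; they carry at most $i$ distinct values. The first step is to lower-bound how many processes hear from $P$ in any communication graph $H$ allowed by the model. Since the model is closed-above, there is a generator $G \in S$ with $E(H) \supseteq E(G)$; adding edges can only enlarge outgoing neighborhoods, so
\[
    \Bigl|\bigcup_{p \in P} Out_H(p)\Bigr|
    \;\geq\; \Bigl|\bigcup_{p \in P} Out_G(p)\Bigr|
    \;\geq\; \cov{i}{G}
    \;\geq\; \cov{i}{S},
\]
where the second inequality applies the definition of $\cov{i}{G}$ to the specific set $P$ of size $i$, and the third uses the definition of $\cov{i}{S}$ as a minimum over $S$.

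The second step is the counting of decided values. Any process that receives at least one message from $P$ receives some value that is no larger than the $i$-th smallest value overall, so its decided minimum must be one of the values held by $P$; in particular it decides one of at most $i$ distinct values. By the inequality above, at least $\cov{i}{S}$ processes fall in this category. The remaining processes, of which there are at most $n - \cov{i}{S}$, can each contribute one further distinct decided value in the worst case. Summing gives at most $i + (n - \cov{i}{S})$ distinct decided values overall. Validity (each decision is an initial value of some process) and termination (a single round suffices) are both immediate.

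The only subtle point, which I regard as the main (mild) obstacle, is justifying the monotonicity $\cov{i}{H} \geq \cov{i}{G}$ when $H$ adds edges to $G$, and confirming that the presence of any value from $P$ in a process's reception forces its minimum to lie in $P$. Both reduce to unpacking definitions: outgoing neighborhoods are monotone under edge addition, and $P$ was chosen precisely to hold the $i$ globally smallest values, so any $P$-value received is necessarily smaller than any non-$P$ value received.
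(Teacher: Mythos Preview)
Your proof is correct and follows essentially the same approach as the paper: run the one-round minimum algorithm, note that the $i$ processes with smallest values reach at least $\cov{i}{S}$ processes, and bound the remaining decisions by $n-\cov{i}{S}$. You are simply more explicit than the paper about why the covering bound for the generators $S$ transfers to any graph $H$ in the closed-above model (monotonicity of outgoing neighborhoods under edge addition), a step the paper leaves implicit.
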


        \begin{proof}
            For a set of $i$ processes with the $i$ smallest initial
            values, they will reach at least $\cov{i}{S}$ processes
            after the first round. Thus these processes
            will decide one of the $i$ values when taking the
            smallest value they received.

            As for the rest of the processes, we can't say anything
            about what they will receive, and thus we consider
            the worst case, where they all decide differently,
            and not one of the $i$ smallest values.
            Then the number of decided values is at most $i+(n-\cov{i}{S})$,
            and the theorem follows.
        \end{proof}

        The covering numbers are also independent of processes names; we thus get
        a similar upper bound on symmetric models as a corollary.

        \begin{corollary}
            Let $S$ be a set of graphs. Then $\forall i \in [1,
            \eDom{S}[: (i+ (n- \cov{i}{S}))$-set agreement is solvable on the oblivious
            closed-above model generated by $Sym(S)$.
        \end{corollary}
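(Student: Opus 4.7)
The plan is to reduce the claim directly to Theorem~\ref{upperCovOne} applied to the set $Sym(S)$. The key observation is that both the covering numbers $\cov{i}{\cdot}$ and the equal-domination number $\eDom{\cdot}$ are invariant under permutations of processes, so they take the same values on $S$ and on its symmetric closure $Sym(S)$; the bound promised by Theorem~\ref{upperCovOne} for $Sym(S)$ is therefore literally the bound claimed here.

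First, I would verify the permutation invariance of the covering numbers. For any graph $G$ and any permutation $\pi : \Pi \to \Pi$, we have $Out_{\pi(G)}(\pi(p)) = \pi(Out_G(p))$, which preserves cardinality; and the bijection $P \mapsto \pi(P)$ ranges over all $i$-element subsets of $\Pi$ as $P$ does. Hence $\cov{i}{\pi(G)} = \cov{i}{G}$, and the same argument gives $\eDom{\pi(G)} = \eDom{G}$. Taking the min (respectively max) in the definitions then yields $\cov{i}{Sym(S)} = \cov{i}{S}$ and $\eDom{Sym(S)} = \eDom{S}$, since every graph in $Sym(S)$ is permutation-equivalent to some graph in $S$.

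Second, I would invoke Theorem~\ref{upperCovOne} with $Sym(S)$ in place of $S$: for every $i \in [1, \eDom{Sym(S)}[ = [1, \eDom{S}[$, $(i + (n - \cov{i}{Sym(S)}))$-set agreement is solvable in one round on the closed-above model generated by $Sym(S)$, and substituting the invariant values gives exactly the claimed bound. The main (and essentially only) obstacle is conceptual rather than technical: one must notice that the algorithm of Theorem~\ref{upperCovOne} selects its $i$ minimum-valued processes purely by the order on initial values, with no dependence on process identities, so enlarging $S$ to $Sym(S)$ does not weaken the argument in any way.
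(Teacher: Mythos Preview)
Your proposal is correct and follows exactly the approach the paper indicates: the paper simply remarks that ``the covering numbers are also independent of processes names; we thus get a similar upper bound on symmetric models as a corollary,'' and your argument spells out precisely this permutation invariance and then invokes Theorem~\ref{upperCovOne} on $Sym(S)$.
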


        When is this new bound better than the one using the equal-domination number?
        When there is some $i$ such that $n-\cov{i}{S} < \eDom{S}-i$.
        Let us take the symmetric models generated by the two graphs in
        Figure~\ref{exampleCom}.

        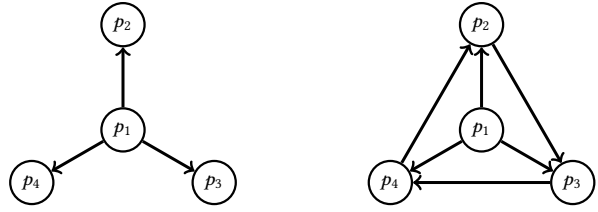
\begin{figure}[t]
        \centering
        \begin{tikzpicture}[scale=.7]
            \tikzstyle{node}=[circle,fill=white,draw,inner sep=.3em, line width=.3mm,font=\footnotesize]
            \tikzstyle{edge}=[->, line width=.4mm]

            \node[node] (p1) at (0,0) {$p_1$};
            \node[node] (p2) at (90:2) {$p_2$};
            \node[node] (p3) at (-30:2) {$p_3$};
            \node[node] (p4) at (210:2) {$p_4$};

            \draw[edge] (p1) to node {} (p2) ;
            \draw[edge] (p1) to node {} (p3) ;
            \draw[edge] (p1) to node {} (p4) ;
        \end{tikzpicture}
        \hspace{5em}
        \begin{tikzpicture}[scale=.7]
            \tikzstyle{node}=[circle,fill=white,draw,inner sep=.3em, line width=.3mm,font=\footnotesize]
            \tikzstyle{edge}=[->, line width=.4mm]

            \node[node] (p1) at (0,0) {$p_1$};
            \node[node] (p2) at (90:2) {$p_2$};
            \node[node] (p3) at (-30:2) {$p_3$};
            \node[node] (p4) at (210:2) {$p_4$};

            \draw[edge] (p1) to node {} (p2) ;
            \draw[edge] (p1) to node {} (p3) ;
            \draw[edge] (p1) to node {} (p4) ;
            \draw[edge] (p2) to node {} (p3) ;
            \draw[edge] (p3) to node {} (p4) ;
            \draw[edge] (p4) to node {} (p2) ;
        \end{tikzpicture}
        \caption{Two examples of communication graphs}\label{exampleCom}
        \end{figure}

        In the first model, $n-\cov{i}{S} < \eDom{S}-i$ never happens,
        because every covering number of a star equals $1$ (the biggest
        set of outgoing neighbors different from $\Pi$ contains only one process),
        and its equal-domination number equals $n$ (because when taking only $n-1$
        processes, the center of the star might not be in there). Thus
        $n-\cov{i}{S} = n-1 \geq \eDom{S}-i=n-i$.

        On the other hand, this is the case in the second model,
        because $\cov{2}{S} = 3$ and $\eDom{S} = 4$.
        We thus we have $n-\cov{2}{S} = 4-3 = 1 < \eDom{S} - i = 4-2 = 2$.
        Hence the upper bound with
        covering numbers ensure $3$-set agreement solvability while the upper bound with
        the equal-domination number only ensures $4$-set agreement solvability.

    \subsection{Intuitions on upper and lower bounds}
    \label{subsec:upperIntuitions}

        Why do our upper bounds hold? Because we can extract from the underlying
        graphs some minimal connectivity of sets of processes. Hence, we know
        from these combinatorial numbers how much the minimal values will spread in the
        worst case, and thus we bound the maximum number of values decided.

        On the other hand, our lower bounds will follow from studying
        how much values can spread in the best case. Why? Because the more values
        can spread, the more processes can distinguish between initial configurations,
        and the more they have a chance to decide correctly. Ensuring enough
        indistinguishability thus entails an impossibility at solving $k$-set agreement.

        This indistinguishability is linked to higher-dimension connectivity
        in combinatorial topology~\cite[Thm. 10.3.1]{HerlihyBook};
        we thus turn to the topological approach
        to distributed computing for our lower bounds.

\section{Elements of combinatorial topology}
\label{sec:topo}

    \subsection{Preliminary definitions}

        First, we need to introduce the mathematical objects that this approach
        uses. These are simplexes and complexes. A simplex is simply a set
        of values, and can be represented as a generalization of a triangle
        in higher dimensions. Simplexes capture configurations in general,
        be them initial configurations, intermediate configurations, or
        decision configurations.

        \begin{definition}[Simplex]
          Let $Cols$ and $Views$ be sets. Then $\sigma \subseteq Cols \times Views$
          is a \textbf{simplex} on $Cols$ and $Views$ (or colored simplex) $\triangleq
          \forall p \in Cols: |\{v \in Views | (p,v) \in \sigma \}| \leq 1$.

          We have $col(\sigma)$ or $names(\sigma) \triangleq
          \{ p \in Cols \mid \exists v \in Views: (p,v) \in \sigma\}$.
          And we have $views(\sigma) =
          \{ v \in Views \mid \exists p \in Cols: (p,v) \in \sigma\}$.
          We also write $view_{\sigma}(p)$ for the $v \in Views$ such that $(p,v) \in \sigma$

          The \textbf{dimension} of $\sigma$ is $|sigma|-1$.
        \end{definition}

        Although we define Views to be any set for readability, the traditional
        view is of sets of pairs, the first element being a process name, and the
        second being either another view or an initial value. For more
        details, refer to~\cite{HerlihyBook}.

        Then a complex is a set of simplexes that is closed under inclusion. It captures
        all considered configurations.

        \begin{definition}[Complex]
          Let $Cols$ and $Views$ be sets. Then $C \in \mathcal{P}(Cols \times Views)$
          is a \textbf{simplicial complex} on $Cols$ and $Views$
          (or colored simplicial complex)
          $\triangleq$
          \begin{itemize}
            \item $\forall (p,v) \in Cols \times Views: \{(p,v)\} \in C$.
            \item $\forall \sigma, \tau$ simplexes on $Cols \times Views$:
              $\sigma \in C \land \tau \subseteq \sigma \implies \tau \in C$.
          \end{itemize}

          The \textbf{facets} of $C$ $\triangleq \{\sigma \in C \mid
          \forall \tau \in C: \sigma \subseteq \tau \implies \tau = \sigma\}$.

          The \textbf{dimension} of $C$ is the maximum dimension of its facets.
          $C$ is called \textbf{pure} if all its facets have the same dimension.
        \end{definition}

        How can we go from our round-based models, which are generated by
        graphs, to simplexes and complexes?

        Starting with a single graph, we define the uninterpreted
        simplex induced by this graph.
        This simplex captures the configuration after a round using graph $G$,
        simply in terms of who hears from whom. It disregards input values,
        which makes it uninterpreted.

        \begin{definition}[Uninterpreted simplex of a graph]
            Let $G$ be a graph. Then the
            \textbf{uninterpreted simplex of $G$} is $\sigma_G \triangleq$
            the colored simplex $\{ (p, In_{G}(p) \mid p \in \Pi \}$.
        \end{definition}

        \begin{figure}
        \centering
        \subcaptionbox{Graph\label{graph}}
        { \begin{tikzpicture}[scale=.6]
            \tikzstyle{node}=[circle,fill=white,draw,inner sep=.3em, line width=.3mm,font=\scriptsize]
            \tikzstyle{edge}=[->, line width=.4mm]

            \node[node, fill=red] (p1) at (0,0) {$p_1$};
            \node[node, fill=green] (p2) at (90:2) {$p_2$};
            \node[node] (p3) at (-30:2) {$p_3$};

            \draw[edge] (p1) to node {} (p2) ;
            \draw[edge] (p3) to node {} (p1) ;
          \end{tikzpicture}}
        \hspace{2cm}
        \subcaptionbox{Uninterpreted simplex\label{simplex}}
        { \begin{tikzpicture}[scale=.6]
            \tikzstyle{node}=[circle,fill=white,draw,inner sep=.3em, line width=.3mm,font=\scriptsize]
            \tikzstyle{edge}=[-, line width=.4mm]

            \node[node, fill=red] (p1) at (0,0) {$(p_1,\{p_1,p_3\}$};
            \node[node, fill=green] (p2) at (5,0) {$(p_2,\{p_1,p_2\}$};
            \node[node] (p3) at (2.5,3) {$(p_3,\{p_3\}$};

            \draw[edge] (p1) to node {} (p2) ;
            \draw[edge] (p1) to node {} (p3) ;
            \draw[edge] (p2) to node {} (p3) ;

            \begin{scope}[on background layer]
                \fill [gray] (p1.center) -- (p2.center) -- (p3.center) -- cycle;
            \end{scope}
          \end{tikzpicture}}
        \caption{A graph and its uninterpreted simplex}\label{exampleUninter}
        \end{figure}
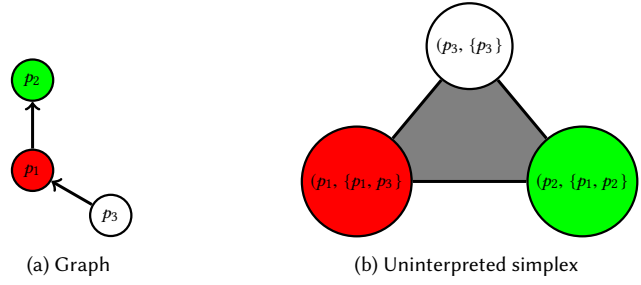

        Given a set of graphs $A$ representing the possible graphs, we generalize
        the previous definition to give the uninterpreted complex of $A$.

        \begin{definition}[Uninterpreted complex of an oblivious model]
            Let $A$ be an oblivous model defined  by a set of graphs $S$.
            Then the \textbf{uninterpreted complex of $A$}
            is $C_A \triangleq$ the complex whose facets are exactly
            the $\{\sigma_G \mid G \in S\}$.
        \end{definition}

    \subsection{Uninterpreted complexes of closed-above models}

        It so happens that closed-above models give rise to uninterpreted
        complex that are easy to define and study. Indeed, they are unions
        of pseudospheres, where pseudospheres are colored complexes topologically
        equivalent to $n$-spheres. These pseudospheres have already been used
        in the literature to study
        multiple models of computation~\cite[Chap. 13]{HerlihyBook}.

        \begin{definition}[Pseudospheres~{\cite[Def 13.3.1]{HerlihyBook}}]
            Let $V_1,V_2,...,V_n$ be sets.
            Then the \textbf{pseudosphere complex} $\phi(\Pi;V_i \mid i \in [1,n])
            \triangleq$
            \begin{itemize}
              \item $\forall i, \forall v \in V_i: (P_i,v)$ is a vertex
                of $C$.
              \item $\forall J \subseteq [1,n]: \{(P_j,v_j) \mid j \in J, v_j \in V_j\}$
                is a simplex of $C$ iff all $P_j$ are distinct.
            \end{itemize}
        \end{definition}

        We can think of these complexes as a generalization of complete
        bipartite graphs in $n$ dimensions. Recall that a complete bipartite graph
        is a graph that can be split into two sets of nodes, the nodes of each set
        not linked to each other and each node of one set linked to all nodes of
        the other set. For example, Figure~\ref{bipart} is a bipartite graph.

        Now a pseudosphere is the same, except that nodes can be partitioned into $n$ sets,
        no simplex contains more than one element of each set as a vertex, and all the simplexes
        built from one element of each set are in the complex.
        Figure~\ref{pseudo} is an example of a pseudosphere built
        from processes $P_1,P_2, P_3$,
        and the three sets $V_1 = \{v_1, v_2\}$, $V_2  = \{v_1, v_2\}$ and $V_3 = \{v\}$.

        \begin{figure}
        \centering
        \subcaptionbox{Bipartite graph\label{bipart}}
        { \begin{tikzpicture}[scale=.8]
            \tikzstyle{node}=[circle,fill=white,draw,inner sep=.3em, line width=.3mm,font=\footnotesize]
            \tikzstyle{edge}=[-, line width=.4mm]

            \node[node] (p1) at (0,0) {$p_1$};
            \node[node] (p2) at (0,1.5) {$p_2$};
            \node[node] (p3) at (0,3) {$p_3$};
            \node[node] (p4) at (2,0) {$p_4$};
            \node[node] (p5) at (2,1.5) {$p_5$};
            \node[node] (p6) at (2,3) {$p_6$};

            \draw[edge] (p1) to node {} (p4) ;
            \draw[edge] (p1) to node {} (p5) ;
            \draw[edge] (p1) to node {} (p6) ;
            \draw[edge] (p2) to node {} (p4) ;
            \draw[edge] (p2) to node {} (p5) ;
            \draw[edge] (p2) to node {} (p6) ;
            \draw[edge] (p3) to node {} (p4) ;
            \draw[edge] (p3) to node {} (p5) ;
            \draw[edge] (p3) to node {} (p6) ;
          \end{tikzpicture}}
        \hspace{2cm}
        \subcaptionbox{Pseudosphere\label{pseudo}}
        { \begin{tikzpicture}[scale=.8]
            \tikzstyle{node}=[circle,fill=white,draw,inner sep=.3em, line width=.3mm,font=\footnotesize]
            \tikzstyle{edge}=[-, line width=.4mm]

            \node[node, fill=red] (p11) at (0,0) {$(P_1,v_1)$};
            \node[node, fill=red] (p12) at (3,3) {$(P_1,v_2)$};
            \node[node, fill=green] (p21) at (0,3) {$(P_2,v_1)$};
            \node[node, fill=green] (p22) at (3,0) {$(P_2, v_2)$};
            \node[node] (p3) at (1.5,1.5) {$(P_3,v)$};

            \draw[edge] (p11) to node {} (p21) ;
            \draw[edge] (p11) to node {} (p22) ;
            \draw[edge] (p11) to node {} (p3) ;
            \draw[edge] (p12) to node {} (p21) ;
            \draw[edge] (p12) to node {} (p22) ;
            \draw[edge] (p12) to node {} (p3) ;
            \draw[edge] (p21) to node {} (p3) ;
            \draw[edge] (p22) to node {} (p3) ;

            \begin{scope}[on background layer]
                \fill [gray] (p11.center) -- (p21.center) -- (p3.center) -- cycle;
                \fill [gray] (p12.center) -- (p21.center) -- (p3.center) -- cycle;
                \fill [gray] (p11.center) -- (p22.center) -- (p3.center) -- cycle;
                \fill [gray] (p12.center) -- (p22.center) -- (p3.center) -- cycle;
            \end{scope}
          \end{tikzpicture}}
        \caption{A bipartite graph and a pseudosphere}
        \end{figure}
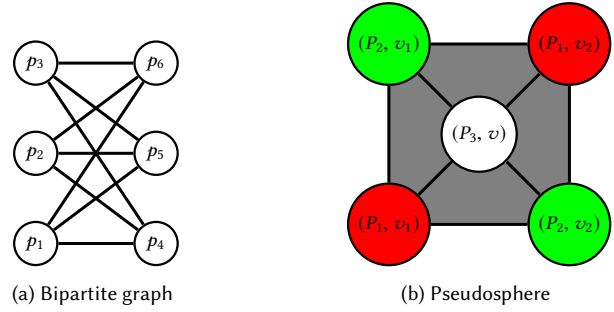

        Among other things, pseudospheres are closed under intersection,
        and are $(n-2)$ connected.

        \begin{lemma}[Intersection of pseudospheres~{\cite[Fact 13.3.4]{HerlihyBook}}]
            \label{capPseudo}
            $\phi(\Pi;U_i \mid i \in [1,n]) \cap \phi(\Pi;V_i \mid i \in [1,n])
            = \phi(\Pi;U_i \cap V_i \mid i \in [1,n])$.
        \end{lemma}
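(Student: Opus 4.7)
The plan is to prove the set equality by double inclusion, in each direction simply unpacking the definition of a pseudosphere simplex. Recall that a simplex $\tau$ lies in $\phi(\Pi;W_i \mid i \in [1,n])$ exactly when $\tau = \{(P_j,w_j) \mid j \in J\}$ for some index set $J \subseteq [1,n]$ with the $P_j$ pairwise distinct and each $w_j \in W_j$. So the question reduces to tracking, for each vertex $(P_j,w_j)$ of a candidate simplex, which of the sets $U_j$, $V_j$, or $U_j \cap V_j$ its second coordinate belongs to.

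For the forward inclusion, I would take $\tau \in \phi(\Pi;U_i \mid i \in [1,n]) \cap \phi(\Pi;V_i \mid i \in [1,n])$. Membership in the first pseudosphere writes $\tau = \{(P_j, w_j) \mid j \in J\}$ with pairwise distinct $P_j$ and $w_j \in U_j$. Membership in the second pseudosphere gives the same index set $J$ (since the first coordinates $P_j$ of the vertices of $\tau$ are fixed) with the same values $w_j$ (since a colored simplex has at most one view per process), and now $w_j \in V_j$. Hence $w_j \in U_j \cap V_j$ for each $j \in J$, so $\tau$ is a simplex of $\phi(\Pi;U_i \cap V_i \mid i \in [1,n])$.

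For the reverse inclusion, a simplex of $\phi(\Pi;U_i \cap V_i \mid i \in [1,n])$ has the form $\tau = \{(P_j,w_j) \mid j \in J\}$ with $w_j \in U_j \cap V_j$. Since $U_j \cap V_j \subseteq U_j$ and $U_j \cap V_j \subseteq V_j$, the same $\tau$ qualifies as a simplex of each of the two pseudospheres on the left, so $\tau$ lies in their intersection. Combining both inclusions yields the claimed equality.

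There is essentially no obstacle here beyond bookkeeping: the key conceptual point, which I would state explicitly once, is that the coloring constraint (distinct names, at most one view per process) forces the same index set $J$ and the same second coordinates $w_j$ in any two representations of the same simplex, so the only nontrivial content of each inclusion is the containment or intersection behavior of the view sets $U_j$, $V_j$, $U_j \cap V_j$.
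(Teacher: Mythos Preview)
Your argument is correct: the double-inclusion via the definition of pseudosphere simplices is exactly the standard verification of this fact. Note that the paper does not actually supply a proof of this lemma---it is quoted as \cite[Fact 13.3.4]{HerlihyBook}---so there is no ``paper proof'' to compare against; your write-up is the natural one-paragraph justification that the cited reference would give.
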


        One advantage of pseudosphere is that they have
        high connectivity~\cite[Def. 3.5.6]{HerlihyBook}. Intuitively,
        connectivity concerns the (non-)existence of high-dimensional generalisation
        of holes in the complexes. Since pseudospheres are topologically equivalent
        to spheres~\cite[Sect. 13.3]{HerlihyBook},
        they only have these holes in the highest dimensions.

        \begin{lemma}[Connectivity of pseudospheres~{\cite[Cor. 13.3.7]{HerlihyBook}}]
            \label{connPseudo}
            $\phi(\Pi;V_i \mid i \in [1,n])$ is $(n-2)$-connected, where
            $n \triangleq |\{i \in [1,n] \mid V_i \neq \emptyset\}|$.
        \end{lemma}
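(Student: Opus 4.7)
The plan is to identify $\phi(\Pi;V_i \mid i \in [1,n])$ with the simplicial join of its color classes viewed as discrete $0$-dimensional complexes, and then invoke the standard connectivity bound for joins. Write $\widetilde{V_i}$ for the $0$-dimensional complex whose vertices are $\{(P_i,v) \mid v \in V_i\}$. By the definition given in the excerpt, a subset of vertices forms a simplex of the pseudosphere exactly when it contains at most one vertex from each $\widetilde{V_i}$, which is precisely the combinatorial description of the iterated join $\widetilde{V_{i_1}} * \widetilde{V_{i_2}} * \cdots * \widetilde{V_{i_k}}$ taken over the indices where $V_i \neq \emptyset$. Empty factors contribute nothing to the join, which explains why the $n$ in the conclusion counts only the non-empty color classes.

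Next I would appeal to the classical join-connectivity formula: if $K$ is $a$-connected and $L$ is $b$-connected, then $K * L$ is $(a+b+2)$-connected (a standard result, see e.g.\ \cite[Sect.\ 3.5]{HerlihyBook}; it can be proved by a Mayer--Vietoris argument on reduced homology together with Hurewicz). The base ingredient is that each non-empty discrete complex is $(-1)$-connected (non-empty but possibly disconnected).

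I would then finish by induction on the number $k$ of non-empty factors. For $k=1$, the pseudosphere is a non-empty discrete set, hence $(-1)$-connected, matching $k-2 = -1$. For the inductive step, joining one more non-empty discrete complex (which is $(-1)$-connected) onto a complex that is $(k-2)$-connected yields connectivity at least $(k-2) + (-1) + 2 = k-1 = (k+1)-2$. Taking $k = n$ gives the desired $(n-2)$-connectivity.

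The one real subtlety is making the identification of the pseudosphere with the geometric join of $n$ discrete point-sets precise enough to invoke the classical connectivity bound; once that is in place, everything else reduces to a routine induction. This matches the intuition stated after the definition that a pseudosphere is topologically a sphere, since the join of $n$ copies of a $0$-sphere is exactly $S^{n-1}$, which is $(n-2)$-connected.
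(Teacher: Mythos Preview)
Your argument is correct. The paper does not actually prove this lemma; it is quoted verbatim as \cite[Cor.~13.3.7]{HerlihyBook} and used as a black box, so there is no in-paper proof to compare against. Your route---identifying the pseudosphere with the join $\widetilde{V_{i_1}} * \cdots * \widetilde{V_{i_k}}$ over the non-empty color classes and then applying the join-connectivity bound $conn(K*L) \geq conn(K)+conn(L)+2$ by induction---is exactly the standard derivation in the cited source, so you have effectively reconstructed the reference's proof. The only cosmetic point is that you should be explicit that the join is taken only over the indices with $V_i \neq \emptyset$ (joining with the empty complex is either trivial or ill-defined depending on convention), but you already flag this.
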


        The connectivity of the uninterpreted complex for a simple closed-above
        model follows, because such a complex is a pseudosphere.
        Intuitively, for any process $p$, its possible views
        are exactly the upward closure of its view in the defining
        graph $G$. Then the $n$-simplexes of the uninterpreted complex
        are exactly the simplex you can build with one such view for each
        process.

        \begin{lemma}[Uninterpreted complex of a simple closed-above
                      model is a pseudosphere]
            \label{simpleClosedPseudo}
            Let $A$ be a simple closed-above model,
            and $G$ be the graph from which it is built.
            Then $C_A =
            \phi(\Pi;\{S \mid In_G(P_i) \subseteq S \subseteq \Pi\}\mid i \in [1,n])$.
        \end{lemma}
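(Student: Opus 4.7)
The plan is to reduce the equality $C_A = \phi(\Pi;\{S \mid In_G(P_i) \subseteq S \subseteq \Pi\}\mid i \in [1,n])$ to an equality of facets. Since simplicial complexes are closed under taking subsimplexes, both sides are determined by their facets, so it suffices to check that these coincide. Writing $V_i \triangleq \{S \mid In_G(P_i) \subseteq S \subseteq \Pi\}$, I would first unfold the definitions: the facets of $C_A$ are exactly the $\sigma_H = \{(P_i, In_H(P_i)) \mid i \in [1,n]\}$ for $H \in \uparrow G$, while the facets of the pseudosphere are the simplexes of the form $\{(P_i, S_i) \mid i \in [1,n]\}$ ranging over all tuples $(S_i)_{i=1}^n$ with $S_i \in V_i$.

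For the inclusion $C_A \subseteq \phi(\Pi; V_i \mid i \in [1,n])$, I would take an arbitrary $H \in \uparrow G$. By definition of $\uparrow G$ we have $V(H) = \Pi$ and $E(H) \supseteq E(G)$, hence $In_G(P_i) \subseteq In_H(P_i) \subseteq \Pi$ for each $i$. Thus $In_H(P_i) \in V_i$, and so $\sigma_H$ is a facet of the pseudosphere.

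For the reverse inclusion, given a facet $\tau = \{(P_i, S_i) \mid i \in [1,n]\}$ of the pseudosphere with each $S_i \in V_i$, I would explicitly construct the graph $H$ on vertex set $\Pi$ with edge set $\{(q,P_i) \mid i \in [1,n], q \in S_i\}$. Then $In_H(P_i) = S_i$ by construction, and $E(H) \supseteq E(G)$ because $S_i \supseteq In_G(P_i)$ for every $i$, so $H \in \uparrow G$ and $\sigma_H = \tau$ is a facet of $C_A$.

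The main obstacle, such as it is, is essentially bookkeeping: one must recognize that a directed graph on the fixed vertex set $\Pi$ is uniquely determined by the profile $(In(P_i))_{i=1}^n$ of its in-neighborhoods, so the construction in the reverse direction is well-defined and inverts the map $H \mapsto \sigma_H$. Once this correspondence is in hand, the closure-above condition $E(H) \supseteq E(G)$ translates literally into the per-coordinate condition $In_H(P_i) \supseteq In_G(P_i)$ defining $V_i$, and the lemma becomes an unpacking of the pseudosphere definition rather than a substantive topological argument.
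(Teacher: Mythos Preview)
Your proposal is correct and follows essentially the same approach as the paper's proof: both directions proceed by matching facets, using that $H \in \uparrow G$ is equivalent to $In_G(P_i) \subseteq In_H(P_i)$ for all $i$. Your version is slightly more explicit in constructing the graph $H$ from the tuple $(S_i)_i$ in the reverse inclusion, but this is exactly what the paper does implicitly when it says ``$\sigma$ is the uninterpreted simplex of a graph $H$ such that $\forall p \in \Pi: In_H(p) \supseteq In_G(p)$''.
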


        \begin{proof}
            \begin{itemize}
                \item $(\subseteq)$. Let $\sigma$ be a $n$-simplex of $C_A$.
                    By definition of $C_A$, it is the uninterpreted
                    simplex of a graph $H \in \uparrow G$. This in turn
                    means that $\forall p \in \Pi : view_{\sigma}(p) =
                    In_H(p) \supseteq In_G(p)$.

                    Thus $\sigma = \{(P_i,In_H(P_i))\mid i \in [1,n]\} \subseteq
                    \phi(\Pi;\{S \mid In_G(P_i) \subseteq S \subseteq \Pi\}\mid i \in [1,n])$.
                \item $(\supseteq)$. Let $\sigma$ be a n-simplex of
                    $\phi(\Pi;\{S \mid In_G(P_i) \subseteq S \subseteq \Pi\}\mid i \in [1,n])$.
                    Then $\forall p \in \Pi: view_{\sigma}(p) \supseteq In_G(p)$.
                    Thus $\sigma$ is the uninterpreted simplex of
                    a graph $H$ such that $\forall p \in \Pi:
                    In_H(p) \supseteq In_G(p)$.

                    We conclude that $H \in \uparrow G$ and thus that
                    $\sigma \in C_A$
            \end{itemize}
        \end{proof}

        It follows instantly that the uninterpreted complexes of simple closed-above
        models are $(|\Pi|-2)$-connected.

        \begin{corollary}[Connectivity of the uninterpreted complex of
                        a simple closed-above model]
            \label{connSimpleClosed}
            Let $A$ be a simple closed-above model,
            and $G$ be the graph from which it is built.
            Then $C_A$ is $(|\Pi|-2)$-connected.
        \end{corollary}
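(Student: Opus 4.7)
The plan is to directly chain Lemma~\ref{simpleClosedPseudo} with Lemma~\ref{connPseudo}. Lemma~\ref{simpleClosedPseudo} has already identified $C_A$ explicitly as the pseudosphere $\phi(\Pi;\{S \mid In_G(P_i) \subseteq S \subseteq \Pi\}\mid i \in [1,|\Pi|])$, so the corollary reduces to a connectivity statement about this one specific pseudosphere.

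Next I would apply Lemma~\ref{connPseudo}, which guarantees $(m-2)$-connectivity of any pseudosphere, where $m$ counts the number of indices whose color class is non-empty. The only thing to check is that in our case every color class is non-empty. For each process $P_i$, the set $\{S \mid In_G(P_i) \subseteq S \subseteq \Pi\}$ contains at least $\Pi$ itself (and also $In_G(P_i)$), so it is non-empty for every $i \in [1,|\Pi|]$. Consequently the effective parameter $m$ in Lemma~\ref{connPseudo} equals $|\Pi|$, and $(|\Pi|-2)$-connectivity of $C_A$ follows.

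There is genuinely no obstacle here: the substantive topological content was packaged into Lemma~\ref{simpleClosedPseudo}, and the connectivity of pseudospheres is imported from Herlihy et al. I expect the actual proof to be two or three lines --- essentially a pointer to the two preceding results together with the brief non-emptiness observation on the color classes.
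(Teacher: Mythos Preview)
Your proposal is correct and matches the paper's intent exactly: the paper does not even spell out a proof, merely stating that the corollary ``follows instantly'' from Lemma~\ref{simpleClosedPseudo} (together with the connectivity of pseudospheres, Lemma~\ref{connPseudo}). Your observation that each color class $\{S \mid In_G(P_i) \subseteq S \subseteq \Pi\}$ is non-empty is the only thing to check, and it is immediate.
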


        From this corollary and the closure of pseudospheres by intersection,
        we now deduce a similar characterization of the connectivity for general closed-above
        models.

        But to do so, we need to first introduce the main tool in our
        toolbox for studying connectivity of simplicial complexes:
        the nerve lemma. This result uses a cover of a complex: a set
        of subcomplexes such that their union gives the initial complex.

        Intuitively, the nerve lemma says that if you provide
        a cover of a complex that is "nice enough", then
        the connectivity of the initial complex can be deduced from
        the way that the cover elements intersects. This is usually
        easier to determine than computing the connectivity directly.

        \begin{definition}[Nerve complex]
            Let $C$ be a simplicial complex, $(C_i)_{i \in I}$ a cover
            of $C$. Then the \textbf{nerve complex} of this cover,
            $\mathcal{N}(C_i \mid I) \triangleq$ the complex generated by
            \begin{itemize}
                \item the vertices are the $C_i$;
                \item and the simplexes are the sets $\{C_i \mid i \in J\}$
                    for $J \subseteq I$ such that
                    $\bigcap\limits_{i \in J} C_i \neq \emptyset$
            \end{itemize}
        \end{definition}

        \begin{lemma}[Nerve lemma~\protect{\cite[Thm 15.24]{Kozlov}}]
            \label{nerveLemma}
            Let $C$ be a simplicial complex, $(C_i)_{i \in I}$ a cover
            of $C$ and $k \geq 0$. Then\\
            $\left(
            \forall J \subseteq I:
            \left(
            \begin{array}{ll}
                dim(\bigcap\limits_{i \in J} C_i) \geq (k-|J|+1)\\
                \bigcap\limits_{i \in J} C_i = \emptyset\\
            \end{array}
            \right)
            \right)\\
            \implies
            (C$ is $k$-connected $\iff \mathcal{N}(C_i \mid I)$ is $k$-connected$)$.
        \end{lemma}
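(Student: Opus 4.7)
The plan is to build an intermediate simplicial complex mediating between $C$ and $\mathcal{N}(C_i \mid I)$, and to show that under the hypothesis both projections transport $k$-connectivity correctly. A convenient choice is the \emph{blowup} (or homotopy colimit) complex $B$ whose simplices are pairs $(\tau, J)$ with $J \subseteq I$ and $\tau$ a simplex of $\bigcap_{i \in J} C_i$, glued in the obvious way. It admits two natural projections, $q_C : B \to C$ forgetting $J$ and $q_{\mathcal{N}} : B \to \mathcal{N}(C_i \mid I)$ forgetting $\tau$.

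I would first show $q_C$ is a weak equivalence with no hypothesis at all: the preimage of a simplex $\tau \in C$ is indexed by the subsets of $\{i \mid \tau \in C_i\}$, which form a contractible poset, and an acyclic-carrier / Quillen's Theorem~A style argument yields the equivalence. Next I would analyze $q_{\mathcal{N}}$ using the hypothesis. Its fiber over a simplex $J$ of the nerve is, up to the obvious reindexing, the intersection $\bigcap_{i \in J} C_i$, which is $(k - |J| + 1)$-connected whenever non-empty. The technical step is then a skeletal induction on $\mathcal{N}(C_i \mid I)$: write $B$ as a sequence of pushouts built over the skeleta, and verify that at each attachment the connectivity of the fiber exactly compensates for the dimension of the cells being added, so that $k$-connectivity is preserved on the relative homotopy groups. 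Composing with the equivalence $q_C$ then gives $\pi_j(C) \cong \pi_j(\mathcal{N}(C_i \mid I))$ for $j \leq k$, which is precisely the biconditional.

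The main obstacle is the bookkeeping in this skeletal induction: one must check that the $(k - |J| + 1)$-connectivity from the hypothesis is exactly what is needed to absorb a $|J|$-dimensional cell of the nerve without losing homotopy information below degree $k+1$. A clean implementation goes through either the Bousfield--Kan spectral sequence for a homotopy colimit, or an iterated Mayer--Vietoris argument layered over the skeleta of the nerve. Since the result is classical and the combinatorial reasoning that follows only needs the conclusion, in a paper of this flavour I would simply cite~\cite[Thm 15.24]{Kozlov} and treat the lemma as a black box translating a topological statement ($k$-connectivity of $C$) into a purely combinatorial one ($k$-connectivity of the nerve), rather than reproduce the spectral-sequence or hocolim apparatus inside this document.
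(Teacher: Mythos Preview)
Your proposal is correct and matches the paper exactly: the paper does not prove this lemma at all but simply cites it from Kozlov and uses it as a black box, precisely as you recommend in your final paragraph. Your sketch via the blowup complex is the standard argument and is sound, but it is indeed unnecessary here.
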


        Now we can prove the connectivity of uninterpreted complexes
        for general closed-above models.

        \begin{theorem}[Connectivity of the uninterpreted complex of
                        a closed-above model]
            \label{connClosed}
            Let $A$ be a closed-above model,
            and $S$ be the set of graphs from which it is built.\\
            Then $C_A$ is $(|\Pi|-2)$-connected.
        \end{theorem}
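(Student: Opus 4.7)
The plan is to reduce to the simple closed-above case already handled by Corollary~\ref{connSimpleClosed}, by decomposing $C_A$ along the generators and applying the Nerve Lemma. For each $G \in S$, let $A_G$ denote the simple closed-above model generated by $G$. Since the set of graphs defining $A$ is $\bigcup_{G \in S} \uparrow\!G$, the facets of $C_A$ are exactly the uninterpreted simplexes of graphs in this union, so I obtain the cover $C_A = \bigcup_{G \in S} C_{A_G}$. By Lemma~\ref{simpleClosedPseudo}, each $C_{A_G}$ equals the pseudosphere $\phi(\Pi;\ \{T \mid In_G(P_i) \subseteq T \subseteq \Pi\} \mid i \in [1,n])$.

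Next I would characterize all finite intersections of cover elements. Iterating Lemma~\ref{capPseudo} and using that the intersection of upward-closed families of subsets of $\Pi$ is itself upward-closed, with minimum equal to the union of the original minima, for every $J \subseteq S$ one gets
\[
  \bigcap_{G \in J} C_{A_G} = \phi\bigl(\Pi;\ \{T \mid \textstyle\bigcup_{G \in J} In_G(P_i) \subseteq T \subseteq \Pi\} \mid i \in [1,n]\bigr).
\]
Each coordinate set is non-empty since it always contains $\Pi$ itself, so by Lemma~\ref{connPseudo} this intersection is a non-empty pseudosphere on all $|\Pi|$ processes, of dimension exactly $|\Pi|-1$.

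Finally I would feed the cover into Lemma~\ref{nerveLemma} with $k = |\Pi|-2$. The disjunctive hypothesis ``intersection empty or $\dim \geq k-|J|+1 = |\Pi|-1-|J|$'' is then immediate, since every finite intersection has dimension $|\Pi|-1$. Moreover, because no intersection is ever empty, the nerve complex $\mathcal{N}(C_{A_G} \mid G \in S)$ is the full simplex on vertex set $S$, which is contractible and hence $k$-connected for every $k$. The Nerve Lemma then transports this to $(|\Pi|-2)$-connectivity of $C_A$.

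The only real subtlety is the intersection identity: once I notice that the common upper bound $\Pi$ is kept in every coordinate set, the full pseudosphere structure (with every $V_i$ non-empty) is preserved under intersection, which both validates the dimension hypothesis of the Nerve Lemma and collapses its nerve to a simplex. With that, the Nerve Lemma does essentially all of the work.
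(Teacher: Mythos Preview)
Your proposal is correct and follows essentially the same route as the paper: cover $C_A$ by the pseudospheres $C_{A_G}$ for $G \in S$, observe via Lemma~\ref{capPseudo} that every finite intersection is again a pseudosphere with all coordinate sets non-empty (each contains the view $\Pi$ coming from the complete graph), and conclude by the Nerve Lemma since the nerve is then a full simplex. Your write-up is in fact slightly more explicit than the paper's, giving the closed form $\bigcap_{G\in J} C_{A_G} = \phi(\Pi;\{T \mid \bigcup_{G\in J} In_G(P_i)\subseteq T\subseteq \Pi\}\mid i)$ for the intersections and verifying the Nerve Lemma hypothesis directly.
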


        \begin{proof}
            From the proof of Theorem~\ref{simpleClosedPseudo}, we know
            that $C_A$ is a union of pseudospheres:
            $C_A = \bigcup\limits_{G \in S} C_G$.
            We want to apply the nerve lemma to this cover. First,
            by Theorem~\ref{connSimpleClosed}, $C_G$ is $(n-2)$-connected.

            As for the intersection of any set $I$ of $C_G$, we have two
            properties. First, it cannot be empty, since all $C_G$ must contains
            the uninterpreted simplex of the complete graph on $\Pi$,
            by definition of $\uparrow G$. This gives us that the nerve complex
            is a simplex, and thus $\infty$-connected.

            And second, the intersection is also a pseudosphere, by application of
            Lemma~\ref{capPseudo}. Indeed, these are intersections
            of pseudospheres with the same processes
            which have an non-empty intersection for each color : the view
            of this process in the complete graph.

            We can thus conclude by application of the nerve lemma and
            Theorem~\ref{connSimpleClosed}.
        \end{proof}

    \subsection{Interpretation of uninterpreted complexes}

        We can only go so far with uninterpreted complexes; at some
        point, we need to consider initial values.

        \begin{definition}[Interpretation of uninterpreted simplex]
            Let $\sigma$ be an uninterpreted simplex on $\Pi$
            and $\tau$ be a $(n-1)$-simplex colored by $\Pi$.
            Then the \textbf{interpretation of $\sigma$ on
            $\tau$}, $\sigma(\tau) \triangleq
            \{(p,V) \mid p \in \Pi \land
            (v \in V \implies (\exists q \in view_{\sigma}(p):
            v = view_{\tau}(q)))\}$
        \end{definition}

        Then the same intuition can be applied to a full uninterpreted
        complex.

        \begin{definition}[Interpretation of uninterpreted complex]
            Let $\mathcal{A}$ be an uninterpreted complex on $\Pi$
            and $\mathcal{I}$ be a pure $(n-1)$ complex colored by
            $\Pi$. Then the \textbf{interpretation of $\mathcal{A}$
            on $\mathcal{I}$}, $\mathcal{A}(\mathcal{I}) \triangleq
            \bigcup\limits_{\substack{\tau \text{ a facet of }\mathcal{I}\\
            \sigma \text{ a facet of } \mathcal{A}}} \sigma(\tau)$
        \end{definition}

        These interpretations give us protocol complexes, on which
        known result on computability are applicable.

    \subsection{A Powerful Tool}
    \label{subsec:tool}

        On the combinatorial topology front, our results leverage two main tools:
        the impossibility result on $k$-set agreement based on
        connectivity~\cite[Thm. 10.3.1]{HerlihyBook},
        and a way to compute the connectivity of a complex from the way it is built. This
        section develops the second idea.

        Let \cA be a pure complex of dimension $d$.
        We say that \cA is \emph{shellable} if there is an ordering
        $\phi_1, \ldots, \phi_r$ of its facets such that for every
        $1 \leq t \leq r-1$,
        $$\left( \bigcup^t_{i=1} \phi_i \right) \cap \phi_{t+1}$$
        is a pure subcomplex of dimension $d-1$ of
        the boundary complex of $\phi_{t+1}$, i.e., of $\skel^{d-1}\phi_{t+1}$.

        The intuition here is that the complex is the union of simplexes of dimension $d$,
        and there is an order in which to add simplexes, so that the new simplex
        is connected to the rest by $d-1$ simplexes, some of its own facets.
        In the concrete case of $2$-simplexes (triangles), they must be connected
        to the rest by $1$-simplexes (edges).

        Here, unions and intersections apply to the complexes induced by the facet and
        all its faces. Such a sequence of facets is a \emph{shelling order} of \cA.

        For example, the complex in Figure~\ref{exShell} is shellable, but the one in
        Figure~\ref{exNotShell} is not.

        \begin{figure}
        \centering
        \subcaptionbox{Shellable complex\label{exShell}}{
        \begin{tikzpicture}[scale=.8]
            \tikzstyle{node}=[circle,fill=white,draw,inner sep=.3em, line width=.3mm,font=\footnotesize]
            \tikzstyle{edge}=[-, line width=.4mm]

            \node[node] (p1) at (0,0) {$p_1$};
            \node[node] (p2) at (1.5,1.5) {$p_2$};
            \node[node] (p3) at (1.5,-1.5) {$p_3$};
            \node[node] (p4) at (3,0) {$p_4$};

            \draw[edge] (p1) to node {} (p2) ;
            \draw[edge] (p1) to node {} (p3) ;
            \draw[edge] (p2) to node {} (p3) ;
            \draw[edge] (p2) to node {} (p4) ;
            \draw[edge] (p4) to node {} (p3) ;

            \begin{scope}[on background layer]
                \fill [gray] (p1.center) -- (p2.center) -- (p3.center) -- cycle;
                \fill [gray] (p4.center) -- (p2.center) -- (p3.center) -- cycle;
            \end{scope}
        \end{tikzpicture}}
        \hspace{5em}
        \subcaptionbox{Not-shellable complex\label{exNotShell}}{
        \begin{tikzpicture}[scale=.8]
            \tikzstyle{node}=[circle,fill=white,draw,inner sep=.3em, line width=.3mm,font=\footnotesize]
            \tikzstyle{edge}=[-, line width=.4mm]

            \node[node] (p1) at (0,0) {$p_1$};
            \node[node] (p2) at (0,3) {$p_2$};
            \node[node] (p3) at (1.5,1.5) {$p_3$};
            \node[node] (p4) at (3,0) {$p_4$};
            \node[node] (p5) at (3,3) {$p_5$};

            \draw[edge] (p1) to node {} (p2) ;
            \draw[edge] (p1) to node {} (p3) ;
            \draw[edge] (p2) to node {} (p3) ;
            \draw[edge] (p4) to node {} (p5) ;
            \draw[edge] (p4) to node {} (p3) ;
            \draw[edge] (p5) to node {} (p3) ;

            \begin{scope}[on background layer]
                \fill [gray] (p1.center) -- (p2.center) -- (p3.center) -- cycle;
                \fill [gray] (p4.center) -- (p5.center) -- (p3.center) -- cycle;
            \end{scope}
        \end{tikzpicture}}
        \caption{Examples of a complex that is shellable and one that is not}
        \end{figure}
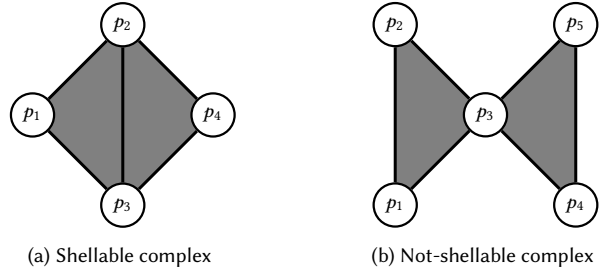

        Given a shelling order $\phi_1, \ldots, \phi_r$ of a complex $\cA$,
        $(\bigcup^t_{i=1} \phi_i) \cap \phi_{t+1}$ is the union of the complexes induced
        by some $(d-1)$-faces $\tau_1, \hdots, \tau_s$ of $\phi_{t+1}$, by definition of shellability.
        Each $\tau_j$ is a face of a facet $\sigma_j$ of $\cup^t_{i=1} \phi_i$,
        hence $\phi_{t+1}$ and $\sigma_j$ share a $(d-1)$-face.
        Then,
        $$\left( \bigcup^t_{i=1} \phi_i \right) \cap \phi_{t+1} = \bigcup^s_{j=1} (\phi_{t+1} \cap \sigma_j).$$

        We also use the following technical result.

        \begin{lemma}[Shellability of simplex boundary~\protect{\cite[Thm 13.2.2]{HerlihyBook}}]
            \label{lemma-shelling-boundary}
            Let \cA be a pure $(d-1)$-dimensional sub-complex of the boundary complex
            of a simplex of dimension $d$.
            Then \cA is shellable, and any sequence of its facets is a shelling order for \cA.
        \end{lemma}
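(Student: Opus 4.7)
The plan is to exploit the rigid combinatorial structure of the boundary complex of a $d$-simplex. Let $\Delta$ be the ambient simplex of dimension $d$, with vertex set $V$ of size $d+1$. Its facets, namely the $(d-1)$-simplexes of $\skel^{d-1}\Delta$, are in bijection with the vertices of $\Delta$: each facet $\phi$ corresponds to the unique vertex $v_\phi \in V$ that $\phi$ omits. Since \cA is pure of dimension $d-1$ and sits inside this boundary, its facets form a subset of these boundary facets.

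The key observation is that for any two distinct boundary facets $\phi, \psi$, the simplicial complex $\phi \cap \psi$ is exactly the full simplex on the $d-1$ vertices $V \setminus \{v_\phi, v_\psi\}$. Thus $\phi \cap \psi$ is a $(d-2)$-simplex that is simultaneously a face of $\phi$ and of $\psi$; this follows immediately from the elementary fact that two distinct $d$-subsets of a $(d+1)$-set overlap in exactly $d-1$ elements.

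Given any ordering $\phi_1, \ldots, \phi_r$ of the facets of \cA and any $t \in [1, r-1]$, I would then use distributivity of intersection over union to write
$$\left( \bigcup_{i=1}^t \phi_i \right) \cap \phi_{t+1} = \bigcup_{i=1}^t (\phi_i \cap \phi_{t+1}).$$
By the key observation, each $\phi_i \cap \phi_{t+1}$ is a $(d-2)$-face of $\phi_{t+1}$, so the right-hand side is a subcomplex of $\skel^{d-2}\phi_{t+1}$ generated by $(d-2)$-simplexes. It is therefore pure of dimension $d-2$, and non-empty because it contains $\phi_1 \cap \phi_{t+1}$. This verifies the shellability condition at step $t+1$; since $t$ and the ordering were both arbitrary, any sequence of facets is a shelling order of \cA.

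There is no real obstacle here: the proof reduces to a single set-theoretic fact about overlaps between $d$-subsets of a $(d+1)$-set. The content of the lemma lies more in recognizing when this rigid structure applies than in any non-trivial combinatorics.
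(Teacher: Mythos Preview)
Your argument is correct and is essentially the standard proof of this fact. The paper does not actually supply a proof of this lemma; it is quoted verbatim as \cite[Thm~13.2.2]{HerlihyBook} and used as a black box in the proof of Lemma~\ref{lemma-shellability-connectivity}. So there is nothing to compare against beyond noting that your derivation matches the textbook one: the facets of $\skel^{d-1}\Delta$ are the vertex-deletions, any two of them meet in a common $(d-2)$-face, and distributivity then shows that every prefix intersection is a union of $(d-2)$-faces of the next facet, hence pure of the required dimension.
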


        Finally, we rely on the straightforward corollary of the nerve lemma
        for a cover with two elements.

        \begin{corollary}[Two elements nerve lemma]
          \label{simpleNerve}
          Let $C$ and $K$ be $k$-connected complexes. If
          $C \cap K$ is $(k-1)$-connected, then $C \cup K$ is $k$-connected.
        \end{corollary}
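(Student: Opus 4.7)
The plan is to apply the nerve lemma (Lemma~\ref{nerveLemma}) directly to the two-element cover $\{C_1, C_2\} = \{C, K\}$ of the complex $C \cup K$, and show that the resulting nerve complex is trivially $k$-connected, which then transfers back to give $k$-connectivity of $C \cup K$.

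First I would unpack the shape of the nerve $\mathcal{N}(C_i \mid \{1, 2\})$. Its vertices are $C$ and $K$, and the edge $\{C, K\}$ belongs to the nerve precisely when $C \cap K \neq \emptyset$. By hypothesis $C \cap K$ is $(k-1)$-connected with $k \geq 0$, so it is in particular non-empty (any $j$-connected complex with $j \geq -1$ is non-empty). Hence the nerve is a single $1$-simplex, which is contractible and therefore $k$-connected for every $k$.

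Next I would verify the hypothesis of the nerve lemma on every subset $J \subseteq \{1, 2\}$. For $J = \{1\}$ and $J = \{2\}$ the intersections are simply $C$ and $K$, both non-empty and $k$-connected. For $J = \{1, 2\}$ the intersection is $C \cap K$, which is non-empty and $(k-1)$-connected by hypothesis. With the dimension side-conditions in hand, the nerve lemma gives the equivalence
\[
C \cup K \text{ is } k\text{-connected} \iff \mathcal{N}(C_i \mid \{1, 2\}) \text{ is } k\text{-connected},
\]
and we already established the right-hand side, so the corollary follows.

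The main obstacle is the dimension conditions $\dim\bigl(\bigcap_{i \in J} C_i\bigr) \geq k - |J| + 1$ that appear in Lemma~\ref{nerveLemma}: they are not literally implied by $k$-connectivity of a complex in full generality (a single point is $k$-connected for every $k \geq 0$ yet has dimension $0$). In the setting of this paper, however, the complexes in play carry enough simplicial structure to make these bounds automatic, so I would either quote this as a standing convention or state the corollary for complexes of dimension at least $k$, which is how it will actually be used in the sequel.
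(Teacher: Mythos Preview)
Your approach is exactly what the paper intends: it introduces the corollary only as ``the straightforward corollary of the nerve lemma for a cover with two elements'' and gives no further argument, so applying Lemma~\ref{nerveLemma} to the cover $\{C,K\}$ and observing that the nerve is a $1$-simplex is precisely the expected derivation.

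Regarding your worry about the dimension side-condition: the hypothesis ``$\dim(\bigcap_{i\in J} C_i)\geq k-|J|+1$'' in the paper's statement of Lemma~\ref{nerveLemma} is almost certainly a transcription slip for the standard connectivity condition in Kozlov's Theorem~15.24, namely that each nonempty $\bigcap_{i\in J} C_i$ be $(k-|J|+1)$-connected. Under that reading the three cases $J=\{1\},\{2\},\{1,2\}$ require $C$ and $K$ to be $k$-connected and $C\cap K$ to be $(k-1)$-connected, which are exactly the hypotheses of the corollary, and your verification goes through without any extra dimension assumption.
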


        We now state the main technical result of the section. It extends the
        result from Casta{\~n}eda et al.~\cite{Sirocco}
        and adapts it to the interpretation
        of complexes we need here. While Casta{\~n}eda et al. studied the complex given
        by the interpretation of a single graph (to capture models like
        $\mathcal{LOCAL}$ and $\mathcal{CONGEST}$~\cite{PelegBook}),
        we care about the complex resulting of the interpretation of a set of graphs.

        We thus send each input simplex into a complex, and show that if both the
        output complexes and the mapping are "nice", the interpreted complex is
        highly connected.

        \begin{lemma}
            \label{lemma-shellability-connectivity}
            Let $\mathcal{A}$ be a pure shellable complex of dimension $d$,
            $\mathcal{B}$ a complex, $(\mathcal{B}_i)_{i \in I}$
            a cover of $\mathcal{B}$, and $\ell \geq 0$ an integer.
            Suppose that there is a bijection $\alpha$ between the facets
            of $\mathcal{A}$ and the elements of $(\mathcal{B}_i)_{i \in I}$
            such that:
            \begin{enumerate}
                \item For every facet $\phi'$ of $\cA$ and
                    every pure $d$-subcomplex $\bigcup^t_{i=1} \phi_i
                    \subseteq \cA$ satisfying that\\
                    $\left( \bigcup^t_{i=1} \phi_i \right) \cap \phi' =
                    \bigcup^s_{i=1} \left( \phi' \cap \sigma_i \right)$
                    for some of $\cA$'s facets $\sigma_1, \hdots, \sigma_s$,
                    with each $\sigma_i$ and $\phi'$ sharing a $(d-1)$-face,
                    it holds that $\left( \bigcup^t_{i=1} \alpha(\phi_i) \right)
                    \cap \alpha(\phi') = \bigcup^s_{i=1} \left( \alpha(\phi')
                    \cap \alpha(\sigma_i) \right)$.
                \item For every $t\geq 0$ and every collection
                    $\phi_0, \phi_1, \ldots, \phi_t$ of $t+1$ facets of $\mathcal{A}$
                    with each $\phi_i$ and $\phi_0$ sharing a $(d-1)$-face,
                    it holds that $\bigcap^t_{i=0} \alpha(\phi_i)$
                    is least $(\ell - t)$-connected.
            \end{enumerate}

            Then, $\mathcal{B}$ is $\ell$-connected.
        \end{lemma}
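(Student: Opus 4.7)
The plan is to proceed by induction on the length of a shelling order $\phi_1, \ldots, \phi_r$ of $\mathcal{A}$, showing that the partial union $\mathcal{B}^{(m)} = \bigcup_{i=1}^m \alpha(\phi_i)$ is $\ell$-connected for every $1 \leq m \leq r$. Because $\alpha$ is a bijection between the facets of $\mathcal{A}$ and the covering elements $(\mathcal{B}_i)_{i \in I}$, we have $\mathcal{B}^{(r)} = \mathcal{B}$, so the case $m = r$ gives the desired conclusion. The base case $m = 1$ asserts that $\alpha(\phi_1)$ is $\ell$-connected, which is exactly the instance of condition~2 applied to the singleton collection $\{\phi_1\}$.

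For the inductive step I would apply Corollary~\ref{simpleNerve} to the decomposition $\mathcal{B}^{(m+1)} = \mathcal{B}^{(m)} \cup \alpha(\phi_{m+1})$. The inductive hypothesis gives $\ell$-connectedness of $\mathcal{B}^{(m)}$, and condition~2 (again on a singleton) gives $\ell$-connectedness of $\alpha(\phi_{m+1})$, so it remains to show that $\mathcal{B}^{(m)} \cap \alpha(\phi_{m+1})$ is $(\ell-1)$-connected. By shellability of $\mathcal{A}$, one has $\mathcal{A}^{(m)} \cap \phi_{m+1} = \bigcup_{j=1}^s (\phi_{m+1} \cap \sigma_j)$ for some facets $\sigma_1, \ldots, \sigma_s$ of $\mathcal{A}^{(m)}$, each sharing a $(d-1)$-face with $\phi_{m+1}$. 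Condition~1 then transfers this identity to $\mathcal{B}$, yielding $\mathcal{B}^{(m)} \cap \alpha(\phi_{m+1}) = \bigcup_{j=1}^s (\alpha(\phi_{m+1}) \cap \alpha(\sigma_j))$.

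The main obstacle is showing that this last union is $(\ell-1)$-connected, which I would handle by applying the nerve lemma (Lemma~\ref{nerveLemma}) to the cover $\{\alpha(\phi_{m+1}) \cap \alpha(\sigma_j)\}_{j=1}^s$. For any $J \subseteq \{1, \ldots, s\}$ of size $k$, the intersection $\bigcap_{j \in J} (\alpha(\phi_{m+1}) \cap \alpha(\sigma_j))$ equals $\alpha(\phi_{m+1}) \cap \bigcap_{j \in J} \alpha(\sigma_j)$, and condition~2 applied to the $k+1$ facets $\phi_{m+1}, \sigma_{j_1}, \ldots, \sigma_{j_k}$ (each of which shares a $(d-1)$-face with $\phi_{m+1}$) shows that this intersection is $(\ell-k)$-connected. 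In particular, for every $k \leq \ell + 1$ the intersection is non-empty, so the nerve of the cover contains the full $\ell$-skeleton of the simplex on the vertex set $\{1, \ldots, s\}$; since this $\ell$-skeleton is $(\ell-1)$-connected and adjoining higher-dimensional simplices cannot affect homotopy groups below their own dimension, the nerve itself is $(\ell-1)$-connected. The $(\ell-k)$-connectedness of the $k$-wise intersections matches the hypothesis of the nerve lemma for target $(\ell-1)$-connectedness, so the lemma delivers the required $(\ell-1)$-connectedness of $\mathcal{B}^{(m)} \cap \alpha(\phi_{m+1})$, closing the induction.
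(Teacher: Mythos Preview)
Your argument is correct, and it takes a genuinely different route from the paper's own proof. The paper proceeds by a \emph{double} induction: an outer induction on $\ell$ and an inner induction on the length of the shelling order. At the key step---showing that $\big(\bigcup_{i<r}\alpha(\phi_i)\big)\cap\alpha(\phi_r)$ is $(\ell-1)$-connected---the paper does not invoke the full nerve lemma. Instead, it sets up a new pair $(\cA',\cB')$, where $\cA'$ is the pure $(d-1)$-subcomplex $\bigcup_j(\phi_r\cap\sigma_j)$ of the boundary of $\phi_r$ (shellable by Lemma~\ref{lemma-shelling-boundary}) and $\cB'$ is the intersection in question, verifies that hypotheses~(1) and~(2) hold for $(\cA',\cB')$ with the parameter $\ell-1$, and then applies the outer inductive hypothesis.

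Your approach replaces this recursion by a single, direct application of the nerve lemma (Lemma~\ref{nerveLemma}) to the cover $\{\alpha(\phi_{m+1})\cap\alpha(\sigma_j)\}_j$, using hypothesis~(2) to control all the multi-intersections at once. This is shorter and avoids the auxiliary shellability result (Lemma~\ref{lemma-shelling-boundary}) altogether; the price is that you rely on the general nerve lemma rather than just its two-piece corollary. One small point worth making explicit in a write-up: the $\sigma_j$ are automatically pairwise distinct (two distinct $d$-simplices share at most one $(d-1)$-face), so the collection $\phi_{m+1},\sigma_{j_1},\dots,\sigma_{j_k}$ really is a collection of $k+1$ facets and hypothesis~(2) applies cleanly. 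Your nerve-connectivity argument (the nerve contains the $\ell$-skeleton of a simplex, hence is $(\ell-1)$-connected) is also correct, though it could be stated more crisply by noting that the nerve lies between the $\ell$-skeleton and the full simplex on $s$ vertices.
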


\section{One round lower bounds for one round: a touch of topology}
\label{sec:lower}

  As before, we start with the simple closed-above case, where the model
  is the closure of a single graph. In this case the tight lower bound
  follows from Casta{\~n}eda et al.~\cite[Thm 5.1]{Sirocco}, as mentionned above.

  \begin{theorem}[Lower bound on $k$-set agreement for simple closed-above
                  models]
    Let $A$ a simple closed-above model generated by the graph
    $G$. Let $k \leq \dom{G}$.
    Then $k$-set agreement is not solvable on $A$ in a single round.
  \end{theorem}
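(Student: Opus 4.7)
The plan is to derive the lower bound by reduction to the already-known bound for the fixed-graph case, namely Casta\~neda et al.~\cite[Thm 5.1]{Sirocco}, which establishes that $k$-set agreement is not solvable in one round on the oblivious model $\{G\}^\omega$ whenever $k < \dom{G}$. The key observation is that the closed-above model generated by $G$ is, from the adversary's point of view, strictly more permissive than the model that fixes the communication graph to $G$ at every round: since $G \in \uparrow G$, every infinite sequence $(G, G, G, \ldots)$ is a legal execution schedule in $A$.

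Concretely, I would argue by contraposition. Suppose for contradiction that there is a one-round algorithm $\mathcal{P}$ solving $k$-set agreement on the simple closed-above model $A$. Then, because the single-graph executions on $G$ are a subset of the executions allowed by $A$, $\mathcal{P}$ must in particular decide correctly (agreement, validity, termination, at most $k$ decided values) on every initial configuration when the adversary picks $G$. This means $\mathcal{P}$ also solves $k$-set agreement in one round on the fixed-graph model $\{G\}^\omega$, directly contradicting the lower bound of Casta\~neda et al. Hence no such $\mathcal{P}$ can exist on $A$, and the conclusion follows.

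The proof is thus essentially a one-line reduction, and the substantive combinatorial-topology content (the connectivity argument for the single-graph protocol complex and the application of the classical $k$-set agreement impossibility via~\cite[Thm. 10.3.1]{HerlihyBook}) is imported wholesale from~\cite{Sirocco}. There is no real technical obstacle, beyond making sure we correctly identify the direction of the inclusion between models: \emph{allowing} the adversary to pick richer graphs than $G$ can only make $k$-set agreement harder, so impossibility propagates upward along the closed-above closure. Note also that the boundary case $k = \dom{G}$ is already solvable by Theorem~\ref{upperDomOne}, so the interesting range is $k < \dom{G}$, which is exactly the range covered by the imported lower bound.
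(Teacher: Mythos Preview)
Your proposal is correct and matches the paper's own argument exactly: the paper states this theorem without a separate proof, having already explained (right after Theorem~\ref{upperDomOne}) that the fixed-graph lower bound of Casta{\~n}eda et al.~\cite[Thm 5.1]{Sirocco} transfers to the simple closed-above model because the latter contains all executions of the former. Your observation about the boundary case $k = \dom{G}$ is also apt; the statement as written should read $k < \dom{G}$, since $\dom{G}$-set agreement is solvable by Theorem~\ref{upperDomOne}.
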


  We thus focus on general closed-above models. Here we have to
  leverage the underlying structure of the protocol complex. We do so through
  two tools : the main theorem from Section~\ref{subsec:tool}, as well as
  two graph parameters: the equal-domination number over a set of graphs,
  and the max-covering numbers of a set of graphs.

  \begin{definition}[Distributed domination number of a set of graphs]
    Let $S$ be a set of graphs. Then the \textbf{distributed domination
    number} $S$, $\eDomOver{S} \triangleq
    min \left\{ i > 0 \middle|
    \begin{array}{l}
      \forall P \subseteq \Pi, \forall S_i \subseteq S:\\
      (|P| = i \land |S_i| = \min(i,|S|))\\
      \implies \bigcup\limits_{G \in S_i} Out_G(P) = \Pi\\
    \end{array}
    \right\}$.
  \end{definition}

  The difference between \eDom{S} and \eDomOver{S} is that a set of
  \eDom{S} processes dominates each graph of $S$ separately, whereas
  a set of \eDomOver{S} processes might not dominate any graph of $S$,
  but it dominates every subset of $i$ graphs of $S$ together.
  Thus \eDomOver{S} $\leq$ \eDom{S}. Fitting, considering the former is used
  in lower bounds and the latter in upper bounds.

  Next, the max-covering numbers are quite subtle. For $i < \eDomOver{S}$,
  the $i$-th max-covering number of $S$ is the maximum number of processes
  hearing a set of $i$ procs, summed over $i$ graphs in $S$.

  That is, the max-covering numbers capture how much values can be disseminated
  in the best case. They serve in lower bounds by giving a best case scenario
  on which we can focus to prove impossibility.

  \begin{definition}[Max covering numbers of a set of graphs]
    Let $S$ be a set of graphs and $i < \eDomOver{S}$. Then
    the $i$-th max-covering number of $S$,
    $\textit{max-cov}_i(S) \triangleq
    \max\limits_{\substack{P \subseteq \Pi, |P| = i\\
    S_i \subseteq S, |S_i| = \min(i,|S|)\\
    \bigcup\limits_{G \in S_i} Out_G(P) \neq \Pi}}
    |(\bigcup\limits_{G \in S_i} Out_G(P))|$.

    We also define the $i$-th max-covering coefficients on $S$, $M_i(S)
    \triangleq
    \left\{
    \begin{array}{ll}
      \left\lfloor \frac{n-i-1}{\mCov{i}{S}-i} \right\rfloor & \text{ if } \mCov{i}{S} > i\\
      n-i & \text{ if } \mCov{i}{S} = i\\
    \end{array}
    \right.$
  \end{definition}

  \begin{theorem}[Lower bound on $k$-set agreement for general closed-above
                  models]
    \label{lowerGeneral}
    Let $A$ a closed-above model generated by the set of graphs
    $S$.\\
    Let $l = \min(\eDomOver{S}-2,\min \{t+M_t(S)-2 \mid t \in [1,\eDomOver{S}-1] \})$
    Then $(l+1)$-set agreement is not solvable on $A$
    in a single round.
  \end{theorem}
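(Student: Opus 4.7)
The plan is to invoke the standard connectivity-based impossibility for $k$-set agreement---if the one-round protocol complex is $l$-connected, then $(l+1)$-set agreement is unsolvable---and reduce the theorem to showing that the protocol complex $\cP$ of $A$ is $l$-connected. Concretely, $\cP$ is the interpretation $C_A(\cI)$ of the uninterpreted complex of $A$ on the input pseudosphere $\cI$ (each process ranging over the initial-value set), whose facets are indexed by pairs $(G,\tau)$ of an allowed graph $G$ and an input assignment $\tau$. Since $l$ is a minimum over several bounds, I would prove the two bounds separately: that $\cP$ is $(\eDomOver{S}-2)$-connected, and, for each $t \in [1,\eDomOver{S}-1]$, that $\cP$ is $(t+M_t(S)-2)$-connected.

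The first bound would follow essentially from Theorem~\ref{connClosed}: the uninterpreted complex is $(n-2)$-connected, and interpretation on the high-dimensional input pseudosphere preserves connectivity up to the dimension at which fewer than $\eDomOver{S}$ processes can still fail to jointly dominate---that is, up to dimension $\eDomOver{S}-2$. For the second bound, I would apply Lemma~\ref{lemma-shellability-connectivity} with $\ell = t+M_t(S)-2$. Fixing a witness $P \subseteq \Pi$ of size $t$ and $S_t \subseteq S$ of size $\min(t,|S|)$ that realises $|\bigcup_{G \in S_t} Out_G(P)| = \mCov{t}{S}$, the $n - \mCov{t}{S}$ processes missed by $P$ under $S_t$ enjoy enough input freedom, quantified by $M_t(S) = \lfloor (n-t-1)/(\mCov{t}{S}-t) \rfloor$, to allow a controlled subdivision of $\cP$. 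I would take $\cA$ as a pure $(t+M_t(S)-1)$-dimensional subcomplex of a simplex boundary (shellable by Lemma~\ref{lemma-shelling-boundary}) whose facets enumerate admissible combinations of a graph in $S_t$ with an input labelling of the free processes, and let $\alpha$ map each such facet to the subcomplex of $\cP$ produced by the corresponding interpretation.

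The main obstacle will be verifying the two technical hypotheses of Lemma~\ref{lemma-shellability-connectivity}. For condition~(1) I must check that the assignment $\alpha$ commutes with the intersection structure arising in the shelling: whenever several facets of $\cA$ meet along a common $(d-1)$-face, the corresponding subcomplexes of $\cP$ must meet in the image of that face, which has to be traced through the definition of interpretation and the closure of pseudospheres under intersection (Lemma~\ref{capPseudo}). For condition~(2) I must show that the intersection of any $t+1$ cover elements is at least $(\ell-t)$-connected; this should reduce, via Lemma~\ref{capPseudo}, to a pseudosphere whose connectivity is governed by Lemma~\ref{connPseudo}, but pinning down exactly how many coordinates survive each additional intersection---and hence that the floor in $M_t(S)$ is the right coefficient rather than a larger one---is the delicate combinatorial heart of the argument and the place where the definition of $\eDomOver{S}$ (forcing non-domination of the chosen $(P,S_t)$) is crucial.
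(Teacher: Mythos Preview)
Your high-level reduction to $l$-connectivity of the protocol complex is right, but the way you propose to organise the argument misreads the role of the minimum and misplaces the data in Lemma~\ref{lemma-shellability-connectivity}.

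In the paper the lemma is applied \emph{once}, with $\cA$ equal to the full input pseudosphere $\Psi(\Pi,[0,k])$ (dimension $n-1$), $\cB = C_A(\cA)$, and $\alpha(\sigma) = C_A(\sigma) = \bigcup_{G\in S} C_G(\sigma)$; the bijection is between input facets and their protocol subcomplexes. The minimum defining $l$ is not a list of separate connectivity statements to be proved one by one: all the terms are needed \emph{simultaneously} to verify hypothesis~(2) for every value of $t$. For $t \geq \eDomOver{S}$ one has $l - t \leq \eDomOver{S} - 2 - t \leq -2$, so the hypothesis is vacuous there; for each $t < \eDomOver{S}$ one shows $\bigcap_{i=0}^{t}\alpha(\phi_i)$ is $(M_t(S)-2)$-connected, and since $l \leq t + M_t(S) - 2$ this gives $(l-t)$-connectivity. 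That last connectivity is obtained by expanding $\bigcap_i C_A(\phi_i) = \bigcup_{G_0,\ldots,G_t\in S}\bigcap_i C_{G_i}(\phi_i)$ and applying the Nerve Lemma to this cover, using Lemmas~\ref{capPseudo} and~\ref{connPseudo} on the pieces.

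Your plan instead fixes a witness $(P,S_t)$ realising $\mCov{t}{S}$ and takes $\cA$ to be a $(t+M_t(S)-1)$-dimensional subcomplex of a simplex boundary whose facets enumerate graphs in $S_t$ together with labellings of the ``free'' processes. But then the image of $\alpha$ only covers the fragment of $\cP$ arising from graphs in $S_t$ and inputs varying on those free processes; it is not a cover of $\cP$, so the lemma says nothing about the connectivity of $\cP$ itself. Likewise, deducing $(\eDomOver{S}-2)$-connectivity of the interpreted complex directly from Theorem~\ref{connClosed} (which is about the uninterpreted complex) does not go through without exactly the kind of argument the lemma is designed to package. The fix is to let the input complex play the role of $\cA$, let all of $S$ enter through $\alpha$, and let the parameter $t$ in hypothesis~(2) do the work you are trying to do by hand.
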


  The term depending on \eDomOver{S} in the lower bound serves when the max-covering
  numbers are not sufficient to distinguish adversaries with different properties.
  Consider for example the symmetric models of all unions of $s$ stars, with
  $s \leq n$. Then for those graphs, for $t < \eDomOver{S}$, we have
  \mCov{t}{S} = t, and thus $M_t(S) = n-t$. Hence the minimum over the
  $t+M_t(S)-2$ is $n-2$.

  But this would mean that $(n-1)$-set agreement is impossible for $s < n$,
  whereas we can clearly solve $2$-set agreement for $s = n-1$, for example.
  What depends on $s$ is $\eDomOver{S}$ itself. More precisely, $\eDomOver{S}
  = n-s+1$, because given $P$, we can consider only the graph where the $s$
  centers of stars are in $\Pi \setminus P$, up until the point where $|P| > n-s$.

  Hence our lower bound shows that for the symmetric union of $s$ stars, $(n-s)$-set
  agreement is impossible in one round. Given that our upper bounds above tell that
  $(n-s+1)$-set agreement is possible in one round for this model,
  the bound is tight.

  Finally, the bound can be specialized for symmetric models.

  \begin{corollary}[Lower bounds for symmetric closed-above model]
    \label{lowerSym}
    Let $G$ be a graph. Let $l=\\
    \min( \eDomOver{Sym(G)}-2, \min\limits_{t \in [1,\eDomOver{Sym(G)}-1]}\\
    \left(
    \begin{array}{ll}
      \hspace{-.2cm}t +
        \left\lfloor \frac{n-t-1}{t(\mCov{t}{\{G\}}-t)} \right\rfloor - 2
        & \text{ if } \mCov{t}{\{G\}} > t\\
      \hspace{-.2cm}n - 2
        & \text{ if } \mCov{t}{\{G\}} = t\\
    \end{array}
    \right)$\\
    Then $(l+1)$-set agreement is not solvable on $Sym(\uparrow G)$
    in a single round.
  \end{corollary}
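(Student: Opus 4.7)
The plan is to derive Corollary~\ref{lowerSym} as a direct application of Theorem~\ref{lowerGeneral} to $S = Sym(G)$, supplemented by a combinatorial inequality bounding $\mCov{t}{Sym(G)}$ in terms of $\mCov{t}{\{G\}}$.

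First, I apply Theorem~\ref{lowerGeneral} with $S = Sym(G)$, which yields that $(l_t+1)$-set agreement is not solvable on $Sym(\uparrow G)$ in one round, where $l_t = \min(\eDomOver{Sym(G)} - 2, \min_t(t + M_t(Sym(G)) - 2))$. Since the impossibility of $(l+1)$-set agreement strengthens as $l$ decreases, it suffices to show that the corollary's value $l_c$ satisfies $l_c \leq l_t$. Both expressions share the term $\eDomOver{Sym(G)} - 2$, so this reduces to checking that, for every $t \in [1, \eDomOver{Sym(G)} - 1]$, the corollary's $t$-indexed term is at most $t + M_t(Sym(G)) - 2$.

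The key combinatorial step is proving $\mCov{t}{Sym(G)} - t \leq t \cdot (\mCov{t}{\{G\}} - t)$. To see this, fix $P$ of size $t$ and $t$ distinct permutations $\pi_1(G), \ldots, \pi_t(G)$ achieving the max in $\mCov{t}{Sym(G)}$, so that $\bigcup_j Out_{\pi_j(G)}(P) \neq \Pi$. Because this union avoids $\Pi$, each individual $Out_{\pi_j(G)}(P) = \pi_j(Out_G(\pi_j^{-1}(P)))$ is a proper subset of $\Pi$; with $\pi_j^{-1}(P)$ of size $t$ and $Out_G(\pi_j^{-1}(P)) \neq \Pi$, its cardinality is at most $\mCov{t}{\{G\}}$. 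The self-loop convention forces $P \subseteq Out_{\pi_j(G)}(P)$, so each term contributes at most $\mCov{t}{\{G\}} - t$ elements beyond $P$, and summing over $j$ yields the claim.

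A short case split then matches each corollary term to $t + M_t(Sym(G)) - 2$: when $\mCov{t}{\{G\}} = t$, the bound forces $\mCov{t}{Sym(G)} = t$ and both sides equal $n - t$; when $\mCov{t}{\{G\}} > t$, either $\mCov{t}{Sym(G)} = t$ (in which case $M_t(Sym(G)) = n-t$ trivially dominates the corollary's floor) or $\mCov{t}{Sym(G)} > t$, and the key inequality gives $M_t(Sym(G)) \geq \lfloor (n-t-1)/(t(\mCov{t}{\{G\}}-t)) \rfloor$ as required. The main obstacle will be ensuring that the per-graph bound by $\mCov{t}{\{G\}}$ really applies to every $Out_{\pi_j(G)}(P)$ simultaneously; this hinges precisely on the observation that a single $Out_{\pi_j(G)}(P) = \Pi$ would already push the union to $\Pi$ and contradict the constraint in the max-cov definition, so the constraint in $\mCov{t}{Sym(G)}$ propagates down to each summand. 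Everything else is routine bookkeeping over the floor functions.
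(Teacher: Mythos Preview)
Your proposal is correct and follows essentially the same route as the paper: apply Theorem~\ref{lowerGeneral} to $S=Sym(G)$ and reduce $M_t(Sym(G))$ to the single-graph quantity $\mCov{t}{\{G\}}$ via the bound $\mCov{t}{Sym(G)}-t \leq t\,(\mCov{t}{\{G\}}-t)$. The only difference is that the paper asserts this relation as an \emph{equality} (arguing that one can permute so that the $t$ out-neighbourhoods are disjoint outside $P$), whereas you prove only the inequality and then use monotonicity of the impossibility in $l$; your version is the more careful one, since the equality can fail when $t+t(\mCov{t}{\{G\}}-t)\geq n$, but the inequality is all that is needed.
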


  Notice that all these lower bounds are valid for general algorithms, not
  only oblivious ones. The reason is that a one round full information protocol
  is an oblivious algorithm.

\section{Multiple rounds}
\label{sec:multiple}

  Given that we focus on oblivious algorithms, a natural approach to extending
  our lower bounds to the multiple rounds case is to look at the product of our
  graphs. By product, we mean the graph of the paths with one edge per graph.
  Thus the products of $r$ graphs capture who will hear who after $r$ corresponding
  communication rounds.

  \begin{definition}[Graph path product]
    Let $G$ and $H$ be graphs with auto-loops
    ($\forall v \in \Pi: (v,v) \in E(G) \land (v,v) \in E(H)$).
    Then their \textbf{graph path product}
    $G \bigotimes H \triangleq$ the graph $(\Pi,E)$ such that $\forall u,v \in \Pi:
    (u,v) \in \Pi \implies \exists w \in \Pi: (u,w) \in E(G) \land (w,v) \in E(H)$.
  \end{definition}

  Since we have a graph as the result, we can apply our lower bounds for one round.
  At least, if the resulting graph still satisfy the hypotheses of our lower bounds.
  It does, although product doesn't maintain closure-above. This subtlety is
  explained in the next subsection.

  \subsection{Closure-above is not invariant by product, but its still works}

    What is the pitfall mentionned above? Quite simply, that the product of
    two closed-above models does not necessarily gives a closed-above model.
    This follows from the fact that the closure-above of a product of graphs
    doesn't always equal the product of the closure-above of the graphs.

    \tikzset{ring/.pic={
      \tikzstyle{node}=[circle,fill=white,draw,inner sep=.3em, line width=.3mm,font=\footnotesize]
      \tikzstyle{edge}=[->, line width=.4mm, bend left=20]

      \node[node] (p1) at (2*360/6:2) {$p_1$};
      \node[node] (p2) at (360/6:2) {$p_2$};
      \node[node] (p3) at (0:2) {$p_3$};
      \node[node] (p4) at (5*360/6:2) {$p_4$};
      \node[node] (p5) at (4*360/6:2) {$p_5$};
      \node[node] (p6) at (3*360/6:2) {$p_6$};

      \draw[edge] (p1) to node {} (p2) ;
      \draw[edge] (p2) to node {} (p3) ;
      \draw[edge] (p3) to node {} (p4) ;
      \draw[edge] (p4) to node {} (p5) ;
      \draw[edge] (p5) to node {} (p6) ;
      \draw[edge] (p6) to node {} (p1) ;
    }}

    \tikzset{ringPlusBase/.pic={
      \tikzstyle{node}=[circle,fill=white,draw,inner sep=.3em, line width=.3mm,font=\footnotesize]
      \tikzstyle{edge}=[->, line width=.4mm, bend left=20]

      \node[node] (p1) at (2*360/6:2) {$p_1$};
      \node[node] (p2) at (360/6:2) {$p_2$};
      \node[node] (p3) at (0:2) {$p_3$};
      \node[node] (p4) at (5*360/6:2) {$p_4$};
      \node[node] (p5) at (4*360/6:2) {$p_5$};
      \node[node] (p6) at (3*360/6:2) {$p_6$};

      \draw[edge] (p1) to node {} (p2) ;
      \draw[edge] (p2) to node {} (p3) ;
      \draw[edge] (p3) to node {} (p4) ;
      \draw[edge] (p4) to node {} (p5) ;
      \draw[edge] (p5) to node {} (p6) ;
      \draw[edge] (p6) to node {} (p1) ;

      \draw[edge,green] (p2) to node {} (p6) ;
    }}

    \tikzset{ringPlusCon/.pic={
      \tikzstyle{node}=[circle,fill=white,draw,inner sep=.3em, line width=.3mm,font=\footnotesize]
      \tikzstyle{edge}=[->, line width=.4mm, bend left=20]

      \node[node] (p1) at (2*360/6:2) {$p_1$};
      \node[node] (p2) at (360/6:2) {$p_2$};
      \node[node] (p3) at (0:2) {$p_3$};
      \node[node] (p4) at (5*360/6:2) {$p_4$};
      \node[node] (p5) at (4*360/6:2) {$p_5$};
      \node[node] (p6) at (3*360/6:2) {$p_6$};

      \draw[edge] (p1) to node {} (p2) ;
      \draw[edge] (p2) to node {} (p3) ;
      \draw[edge] (p3) to node {} (p4) ;
      \draw[edge] (p4) to node {} (p5) ;
      \draw[edge] (p5) to node {} (p6) ;
      \draw[edge] (p6) to node {} (p1) ;

      \draw[edge,green] (p3) to node {} (p6) ;
    }}

    \tikzset{ringPlusChaud1/.pic={
      \tikzstyle{node}=[circle,fill=white,draw,inner sep=.3em, line width=.3mm,font=\footnotesize]
      \tikzstyle{edge}=[->, line width=.4mm, bend left=20]

      \node[node] (p1) at (2*360/6:2) {$p_1$};
      \node[node] (p2) at (360/6:2) {$p_2$};
      \node[node] (p3) at (0:2) {$p_3$};
      \node[node] (p4) at (5*360/6:2) {$p_4$};
      \node[node] (p5) at (4*360/6:2) {$p_5$};
      \node[node] (p6) at (3*360/6:2) {$p_6$};

      \draw[edge] (p1) to node {} (p2) ;
      \draw[edge] (p2) to node {} (p3) ;
      \draw[edge] (p3) to node {} (p4) ;
      \draw[edge] (p4) to node {} (p5) ;
      \draw[edge] (p5) to node {} (p6) ;
      \draw[edge] (p6) to node {} (p1) ;

      \draw[edge,green] (p2) to node {} (p4) ;
    }}

    \tikzset{ringPlusChaud2/.pic={
      \tikzstyle{node}=[circle,fill=white,draw,inner sep=.3em, line width=.3mm,font=\footnotesize]
      \tikzstyle{edge}=[->, line width=.4mm, bend left=20]

      \node[node] (p1) at (2*360/6:2) {$p_1$};
      \node[node] (p2) at (360/6:2) {$p_2$};
      \node[node] (p3) at (0:2) {$p_3$};
      \node[node] (p4) at (5*360/6:2) {$p_4$};
      \node[node] (p5) at (4*360/6:2) {$p_5$};
      \node[node] (p6) at (3*360/6:2) {$p_6$};

      \draw[edge] (p1) to node {} (p2) ;
      \draw[edge] (p2) to node {} (p3) ;
      \draw[edge] (p3) to node {} (p4) ;
      \draw[edge] (p4) to node {} (p5) ;
      \draw[edge] (p5) to node {} (p6) ;
      \draw[edge] (p6) to node {} (p1) ;

      \draw[edge,green] (p4) to node {} (p6) ;
    }}

    \tikzset{ringSquared/.pic={
      \tikzstyle{node}=[circle,fill=white,draw,inner sep=.3em, line width=.3mm,font=\footnotesize]
      \tikzstyle{edge}=[->, line width=.4mm, bend left=20]
      \tikzstyle{flat}=[->, line width=.4mm]

      \node[node] (p1) at (2*360/6:2) {$p_1$};
      \node[node] (p2) at (360/6:2) {$p_2$};
      \node[node] (p3) at (0:2) {$p_3$};
      \node[node] (p4) at (5*360/6:2) {$p_4$};
      \node[node] (p5) at (4*360/6:2) {$p_5$};
      \node[node] (p6) at (3*360/6:2) {$p_6$};

      \draw[edge] (p1) to node {} (p2) ;
      \draw[edge] (p2) to node {} (p3) ;
      \draw[edge] (p3) to node {} (p4) ;
      \draw[edge] (p4) to node {} (p5) ;
      \draw[edge] (p5) to node {} (p6) ;
      \draw[edge] (p6) to node {} (p1) ;

      \draw[flat] (p1) to node {} (p3) ;
      \draw[flat] (p2) to node {} (p4) ;
      \draw[flat] (p3) to node {} (p5) ;
      \draw[flat] (p4) to node {} (p6) ;
      \draw[flat] (p5) to node {} (p1) ;
      \draw[flat] (p6) to node {} (p2) ;
    }}

    \tikzset{ringSquaredPlus/.pic={
      \tikzstyle{node}=[circle,fill=white,draw,inner sep=.3em, line width=.3mm,font=\footnotesize]
      \tikzstyle{edge}=[->, line width=.4mm, bend left=20]

      \node[node] (p1) at (2*360/6:2) {$p_1$};
      \node[node] (p2) at (360/6:2) {$p_2$};
      \node[node] (p3) at (0:2) {$p_3$};
      \node[node] (p4) at (5*360/6:2) {$p_4$};
      \node[node] (p5) at (4*360/6:2) {$p_5$};
      \node[node] (p6) at (3*360/6:2) {$p_6$};

      \draw[edge] (p1) to node {} (p2) ;
      \draw[edge] (p2) to node {} (p3) ;
      \draw[edge] (p3) to node {} (p4) ;
      \draw[edge] (p4) to node {} (p5) ;
      \draw[edge] (p5) to node {} (p6) ;
      \draw[edge] (p6) to node {} (p1) ;

      \draw[edge] (p1) to node {} (p3) ;
      \draw[edge] (p2) to node {} (p4) ;
      \draw[edge] (p3) to node {} (p5) ;
      \draw[edge] (p4) to node {} (p6) ;
      \draw[edge] (p5) to node {} (p1) ;
      \draw[edge] (p6) to node {} (p2) ;

      \draw[edge, green] (p2) to node {} (p6) ;
    }}

    \tikzset{ringSquaredPlusFail1/.pic={
      \tikzstyle{node}=[circle,fill=white,draw,inner sep=.3em, line width=.3mm,font=\footnotesize]
      \tikzstyle{edge}=[->, line width=.4mm, bend left=20]

      \node[node] (p1) at (2*360/6:2) {$p_1$};
      \node[node] (p2) at (360/6:2) {$p_2$};
      \node[node] (p3) at (0:2) {$p_3$};
      \node[node] (p4) at (5*360/6:2) {$p_4$};
      \node[node] (p5) at (4*360/6:2) {$p_5$};
      \node[node] (p6) at (3*360/6:2) {$p_6$};

      \draw[edge] (p1) to node {} (p2) ;
      \draw[edge] (p2) to node {} (p3) ;
      \draw[edge] (p3) to node {} (p4) ;
      \draw[edge] (p4) to node {} (p5) ;
      \draw[edge] (p5) to node {} (p6) ;
      \draw[edge] (p6) to node {} (p1) ;

      \draw[edge] (p1) to node {} (p3) ;
      \draw[edge] (p2) to node {} (p4) ;
      \draw[edge] (p3) to node {} (p5) ;
      \draw[edge] (p4) to node {} (p6) ;
      \draw[edge] (p5) to node {} (p1) ;
      \draw[edge] (p6) to node {} (p2) ;

      \draw[edge, green] (p2) to node {} (p6) ;
      \draw[edge, red] (p2) to node {} (p1) ;
    }}

    \tikzset{ringSquaredPlusFail2/.pic={
      \tikzstyle{node}=[circle,fill=white,draw,inner sep=.3em, line width=.3mm,font=\footnotesize]
      \tikzstyle{edge}=[->, line width=.4mm, bend left=20]

      \node[node] (p1) at (2*360/6:2) {$p_1$};
      \node[node] (p2) at (360/6:2) {$p_2$};
      \node[node] (p3) at (0:2) {$p_3$};
      \node[node] (p4) at (5*360/6:2) {$p_4$};
      \node[node] (p5) at (4*360/6:2) {$p_5$};
      \node[node] (p6) at (3*360/6:2) {$p_6$};

      \draw[edge] (p1) to node {} (p2) ;
      \draw[edge] (p2) to node {} (p3) ;
      \draw[edge] (p3) to node {} (p4) ;
      \draw[edge] (p4) to node {} (p5) ;
      \draw[edge] (p5) to node {} (p6) ;
      \draw[edge] (p6) to node {} (p1) ;

      \draw[edge] (p1) to node {} (p3) ;
      \draw[edge] (p2) to node {} (p4) ;
      \draw[edge] (p3) to node {} (p5) ;
      \draw[edge] (p4) to node {} (p6) ;
      \draw[edge] (p5) to node {} (p1) ;
      \draw[edge] (p6) to node {} (p2) ;

      \draw[edge, green] (p2) to node {} (p6) ;

      \draw[edge, red] (p1) to node {} (p6) ;
    }}

    \tikzset{ringSquaredPlusFail3/.pic={
      \tikzstyle{node}=[circle,fill=white,draw,inner sep=.3em, line width=.3mm,font=\footnotesize]
      \tikzstyle{edge}=[->, line width=.4mm, bend left=20]

      \node[node] (p1) at (2*360/6:2) {$p_1$};
      \node[node] (p2) at (360/6:2) {$p_2$};
      \node[node] (p3) at (0:2) {$p_3$};
      \node[node] (p4) at (5*360/6:2) {$p_4$};
      \node[node] (p5) at (4*360/6:2) {$p_5$};
      \node[node] (p6) at (3*360/6:2) {$p_6$};

      \draw[edge] (p1) to node {} (p2) ;
      \draw[edge] (p2) to node {} (p3) ;
      \draw[edge] (p3) to node {} (p4) ;
      \draw[edge] (p4) to node {} (p5) ;
      \draw[edge] (p5) to node {} (p6) ;
      \draw[edge] (p6) to node {} (p1) ;

      \draw[edge] (p1) to node {} (p3) ;
      \draw[edge] (p2) to node {} (p4) ;
      \draw[edge] (p3) to node {} (p5) ;
      \draw[edge] (p4) to node {} (p6) ;
      \draw[edge] (p5) to node {} (p1) ;
      \draw[edge] (p6) to node {} (p2) ;

      \draw[edge, green] (p2) to node {} (p6) ;

      \draw[edge, red] (p2) to node {} (p5) ;
      \draw[edge, red] (p3) to node {} (p6) ;
    }}

    \tikzset{ringSquaredPlusFail4/.pic={
      \tikzstyle{node}=[circle,fill=white,draw,inner sep=.3em, line width=.3mm,font=\footnotesize]
      \tikzstyle{edge}=[->, line width=.4mm, bend left=20]

      \node[node] (p1) at (2*360/6:2) {$p_1$};
      \node[node] (p2) at (360/6:2) {$p_2$};
      \node[node] (p3) at (0:2) {$p_3$};
      \node[node] (p4) at (5*360/6:2) {$p_4$};
      \node[node] (p5) at (4*360/6:2) {$p_5$};
      \node[node] (p6) at (3*360/6:2) {$p_6$};

      \draw[edge] (p1) to node {} (p2) ;
      \draw[edge] (p2) to node {} (p3) ;
      \draw[edge] (p3) to node {} (p4) ;
      \draw[edge] (p4) to node {} (p5) ;
      \draw[edge] (p5) to node {} (p6) ;
      \draw[edge] (p6) to node {} (p1) ;

      \draw[edge] (p1) to node {} (p3) ;
      \draw[edge] (p2) to node {} (p4) ;
      \draw[edge] (p3) to node {} (p5) ;
      \draw[edge] (p4) to node {} (p6) ;
      \draw[edge] (p5) to node {} (p1) ;
      \draw[edge] (p6) to node {} (p2) ;

      \draw[edge, green] (p2) to node {} (p6) ;
      \draw[edge, red] (p3) to node {} (p6) ;
    }}

    Let's take an example: the product of a cycle with itself.

    \begin{center}
    \begin{tikzpicture}[scale=.4]
      \pic[scale=.4] at (0,0) {ring};
      \node[scale=.4] (op) at (3.3,0) {\Huge $\bigotimes$};
      \pic[scale=.4] at (6.5,0) {ring};
      \node[scale=.4] (eq) at (9.5,0) {\Huge $=$};
      \pic[scale=.4] at (12.5,0) {ringSquared};
    \end{tikzpicture}
    \end{center}

    Then we cannot build the following graph by extending the cycles and taking the
    product:

    \begin{center}
    \begin{tikzpicture}
      \pic[scale=.6] {ringSquaredPlus};
    \end{tikzpicture}
    \end{center}

    Why? Simply put, adding the new edge to either of the two cycles necessarily
    creates other edges in the product. Adding an edge from $p_2$ to any other
    node than $p_3$ and $p_4$ also creates new edges; so does adding an edge
    to $p_4$ and then an edge from $p_4$ to $p_6$, or an edge from $p_3$ to $p_6$ in
    the second graph.

    Hence the product of the closure above of this cycle with itself is not
    the closure-above of the squared cycle. To put it differently, closure-above
    is not invariant by the product operation.

    Nonetheless, the bell does not toll for our hopes of extending our properties.
    What is used in the lower bound proofs above is not closure-above itself, but
    its consequences: being a union of pseudospheres containing the full simplex,
    such that for each pseudosphere, all graphs contain the smallest graph.

    All three properties are present in a specific subset of the product of
    two simple closed-above models: all products where edges might be added to the last
    graph in the product but not to the other. Each added edge only alters
    the view of its destination, since it is in the second graph, and multiple added
    edges don't interfere because they are all added to the same graph. Hence we can
    change the views of processes one at a time, and thus we get a pseudosphere.
    Since adding no edge gives the original product and
    adding all missing edges gives the clique, we get the other two properties.
    Then taking this subset of the product of two general closed-above models result
    in a union of pseudosphres, one for each product of the underlying graphs.

    Therefore we can extract relevant subcomplexes from the product of closed-above
    models, and then the lower bounds only depends on the properties of the
    underlying product of graphs.

  \subsection{Upper bounds for multiple rounds}

    Even if we just explained how to deal with lower bounds for multiple bounds,
    we still start by giving upper bounds for multiple rounds. This is for the
    same reason as in the one round case: the upper-bounds require no combinatorial topology,
    and they allow us to introduce concepts needed for the lower bounds.

    First, we need to prove a little result that is enough for our upper bounds:
    that the product of closed-above models is included in the closure-above of the
    product.

    \begin{lemma}[Product and inclusion for closed-above]
      \label{prodIncl}
      Let $G$ and $H$ be two graphs. Then $\uparrow G
      \bigotimes \uparrow H \subseteq \uparrow~(G~\bigotimes~H)$.
    \end{lemma}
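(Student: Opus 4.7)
The plan is to unfold the definitions and show directly that any graph on the left-hand side has at least all the edges of $G \bigotimes H$, which is exactly the membership condition for the closure-above on the right-hand side.

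Concretely, I would start by fixing an arbitrary element of $\uparrow G \bigotimes \uparrow H$, which has the form $G' \bigotimes H'$ for some $G' \in \uparrow G$ and $H' \in \uparrow H$. By the definition of $\uparrow$, we have $V(G') = V(H') = \Pi$, $E(G') \supseteq E(G)$, and $E(H') \supseteq E(H)$. The vertex set of $G' \bigotimes H'$ is then $\Pi$ as well, matching that of $G \bigotimes H$, so the only thing left to verify is the edge containment $E(G' \bigotimes H') \supseteq E(G \bigotimes H)$.

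For that, I would pick an arbitrary edge $(u,v) \in E(G \bigotimes H)$. By the definition of the graph path product, there exists some intermediate vertex $w \in \Pi$ with $(u,w) \in E(G)$ and $(w,v) \in E(H)$. Monotonicity of the edge sets immediately gives $(u,w) \in E(G')$ and $(w,v) \in E(H')$, and then applying the definition of the product again in the forward direction yields $(u,v) \in E(G' \bigotimes H')$. This establishes $G' \bigotimes H' \in \uparrow(G \bigotimes H)$, which is what we needed.

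There is really no hard step here; the only subtlety worth flagging is making sure the self-loop hypothesis in the definition of graph path product is preserved when passing to $G'$ and $H'$, but this is free because adding edges cannot remove the required self-loops at each vertex. In effect, the lemma is just the observation that the product operation is monotone in each argument with respect to edge inclusion, and the closure-above operator is precisely the upward closure under that same order.
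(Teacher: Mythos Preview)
Your proof is correct and follows essentially the same approach as the paper: fix $G' \bigotimes H'$ with $G' \supseteq G$ and $H' \supseteq H$, take an arbitrary edge $(u,v)$ of $G \bigotimes H$, use the intermediate vertex $w$ witnessing it, and push the two edges through the inclusions to conclude $(u,v) \in G' \bigotimes H'$. Your remark about self-loops and the monotonicity phrasing are small additions, but the argument is the same.
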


    \begin{proof}
      Let $K \in \uparrow G \bigotimes \uparrow H$. Thus
      $\exists G' \in \uparrow G, \exists H' \in \uparrow H: K=G' \bigotimes H'$.
      Let $u, v \in \Pi$ such that $(u,v) \in G \bigotimes H$. We show that
      $(u,v) \in K$; this will entail that $K \in \uparrow (G \bigotimes H)$.

      Because $(u,v) \in G \bigotimes H$, $\exists w \in \Pi:
      (u,w) \in E(G) \land (w,v) \in E(H)$. But $G' \in \uparrow G$
      and $H' \in \uparrow H$, therefore $(u,w) \in E(G') \land (w,v) \in E(H')$.
      We conclude that $(u,v) \in G' \bigotimes H' = K$.
    \end{proof}

    What this means is that taking the closure-above of the products of our graphs
    over-approximate the actual model after $r$ rounds. And thus, algorithms working
    on these approximations work on the actual model.

    Now, let us start with simple closed-above models.
    Just like for the one round case, they are completely
    characterized by the domination number of their underlying graph.

    \begin{theorem}[Upper bound (multiple rounds) for simple closed-above models]
      Let $A$ be a simple closed-above model defined by the graph $G$.
      Let $r > 0$. Then $\dom{G^r}$-set agreement is solvable
      in $r$ rounds in $A$.
    \end{theorem}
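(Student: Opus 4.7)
The plan is to mirror the proof of Theorem~\ref{upperDomOne} from the one-round case, using Lemma~\ref{prodIncl} to reduce the $r$-round setting to reasoning about a single ``aggregated'' graph. Since $G$ is fixed and known to all processes, they can compute a minimum dominating set $D$ of $G^r = G \bigotimes G \bigotimes \cdots \bigotimes G$ (iterated $r$ times) with $|D| = \dom{G^r}$ beforehand, and agree on this set.

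The algorithm is the natural $r$-round extension of the one in Theorem~\ref{upperDomOne}: each process sends, at every round, all initial values it currently knows; after $r$ rounds, each process decides the minimum initial value it has received among those originally held by processes in $D$. I would first argue that this rule is always well-defined, i.e.\ every process hears from at least one element of $D$. For this I would iterate Lemma~\ref{prodIncl} by an easy induction on $r$: if each of the $r$ actual round-graphs $H_1, \dots, H_r$ lies in $\uparrow G$, then $H_1 \bigotimes \cdots \bigotimes H_r \in \uparrow G^r$, so this product contains every edge of $G^r$. Since $D$ dominates $G^r$, each process $p \in \Pi$ has an in-edge from some $d \in D$ in $G^r$, hence also in the product of the actual round-graphs, which means $p$ has transitively received $d$'s initial value over the $r$ rounds.

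Validity and $k$-set agreement then follow immediately: the value decided by any process is one of the initial values held by some member of $D$, so at most $|D| = \dom{G^r}$ distinct values are decided across the system, and every decision is an input value.

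The main obstacle is really just the bookkeeping around the induction for the iterated product; Lemma~\ref{prodIncl} handles the base case, and the associativity of $\bigotimes$ together with a second application of the lemma gives the inductive step. Once that is in place, the rest of the argument is structurally identical to the one-round proof, with $G^r$ playing the role of $G$.
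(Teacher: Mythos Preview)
Your proposal is correct and follows essentially the same approach as the paper: both use Lemma~\ref{prodIncl} (iterated over $r$ rounds) to ensure that the actual $r$-round product graph lies in $\uparrow G^r$, and then appeal to the one-round argument of Theorem~\ref{upperDomOne} applied to $G^r$ with a fixed minimum dominating set. The paper's proof is simply the condensed version of yours, invoking Theorem~\ref{upperDomOne} as a black box rather than unfolding its algorithm and correctness argument explicitly.
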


    \begin{proof}
      We have that $\dom{G^r}$-set agreement is solvable on $\uparrow G^r$ by
      Theorem~\ref{upperDomOne}. This then implies by Lemma~\ref{prodIncl}
      that it is solvable on $(\uparrow G)^r$, that is on $A$.
    \end{proof}

    But for general closed-above models, one cannot use the domination number
    itself, because one cannot know which of the underlying graphs will be
    there. As in the one round case, we use the equal-domination number and covering
    numbers.

    \begin{theorem}[Upper bound (multiple rounds) on $k$-set agreement by \eDom{S}
                    for general closed-above models]
      Let $A$ be a general closed-above model generated by the set of
      graphs $S$. Let $r > 0$.
      Then $\eDom{S^r}$-set agreement is solvable in $r$ rounds on $A$.
    \end{theorem}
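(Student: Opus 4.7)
The plan is to mimic the proof of the preceding theorem for simple closed-above models, but lift it to sets of generators by replacing Theorem~\ref{upperDomOne} with Theorem~\ref{upperEdomOne}. Write $S^r$ for the set of $r$-fold products $\{G_1 \bigotimes \cdots \bigotimes G_r \mid G_1,\dots,G_r \in S\}$. The closed-above model generated by $S^r$ is then the natural ``collapsed'' one-round version of $r$ rounds of $A$.

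First I would extend Lemma~\ref{prodIncl} from single graphs to sets of graphs. By a straightforward induction on $r$, using associativity of $\bigotimes$ and applying Lemma~\ref{prodIncl} at each step, one gets
\[
\bigotimes_{i=1}^{r} \Bigl(\bigcup_{G \in S} \uparrow G\Bigr) \ \subseteq\ \bigcup_{(G_1,\dots,G_r)\in S^r} \uparrow (G_1 \bigotimes \cdots \bigotimes G_r).
\]
The right-hand side is precisely the one-round closed-above model generated by $S^r$, while the left-hand side is the set of graphs that can occur as the product of the $r$ actual communication graphs in any execution of $A$ of length $r$.

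Next I would apply Theorem~\ref{upperEdomOne} to the set $S^r$: this yields a one-round algorithm solving $\eDom{S^r}$-set agreement on the closed-above model generated by $S^r$. I would then simulate this one-round algorithm over $r$ rounds of $A$: each process forwards all the initial values it has ever heard, so that after $r$ rounds the set of pairs (process, initial value) known by process $p$ is exactly the set $\{(q, v_q) \mid q \in In_{K}(p)\}$, where $K$ is the $r$-fold product of the actual round graphs. By the inclusion above, $K$ belongs to a graph in the closed-above model generated by $S^r$, so the one-round decision rule (take the minimum initial value among a fixed equal-dominating set, or rather the minimum value received as in the proof of Theorem~\ref{upperEdomOne}) returns one of at most $\eDom{S^r}$ distinct values.

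The main obstacle is really just the inductive inclusion, since associativity and the single-step Lemma~\ref{prodIncl} combine cleanly; no combinatorial topology or new idea beyond Theorem~\ref{upperEdomOne} is needed. Once that is in place, the simulation argument and validity/agreement of the decision rule are inherited directly from the one-round case, so the conclusion follows.
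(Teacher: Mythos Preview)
Your proposal is correct and matches the paper's own proof: apply Theorem~\ref{upperEdomOne} to the set $S^r$ to get a one-round algorithm on $\bigcup_{G_1,\dots,G_r\in S}\uparrow\bigotimes_{i=1}^r G_i$, then use Lemma~\ref{prodIncl} (iterated over $r$ factors) to conclude it works on $\bigcup_{G_1,\dots,G_r\in S}\bigotimes_{i=1}^r \uparrow G_i$, i.e.\ on $A$ after $r$ rounds. The paper is simply terser, leaving the induction on $r$ and the forwarding simulation implicit.
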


    \begin{proof}
      We have that $\eDom{S^r}$-set agreement is solvable on\\
      $\bigcup\limits_{G_1,...,G_r \in S}
      \uparrow \bigotimes\limits_{i=1}^r G_i$ by
      Theorem~\ref{upperEdomOne}. This then implies by Lemma~\ref{prodIncl}
      that it is solvable on
      $\bigcup\limits_{G_1,...,G_r \in S}
      \bigotimes\limits_{i=1}^r \uparrow G_i$, that is on $A$.
    \end{proof}

    \begin{theorem}[Upper bounds (multiple rounds) on $k$-set agreement by covering numbers
                    for general closed-above models]
        \label{upperCovMulti}
        Let $A$ be a general closed-above model generated by the set of
        graphs $S$. Let $r > 0$.
        Then $\forall i \in [1, \eDom{S^r}[: (i+ (n- \cov{i}{S^r}))$-set
        agreement is solvable on the oblivious closed-above model
        generated by $S$ in $r$ rounds.
    \end{theorem}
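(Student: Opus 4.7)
The plan is to follow the same two-step recipe already used for the two preceding multiple-rounds upper bounds: reduce the $r$-round problem on $A$ to a one-round problem on the closed-above model generated by $S^r = \{G_1 \bigotimes \cdots \bigotimes G_r \mid G_1,\ldots,G_r \in S\}$, and then transport the resulting algorithm back to $A$ via Lemma~\ref{prodIncl}.

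First, I would apply Theorem~\ref{upperCovOne} to the set of graphs $S^r$. Since every $i \in [1, \eDom{S^r}[$ is a valid input to that theorem, we immediately obtain a one-round algorithm solving $(i + (n - \cov{i}{S^r}))$-set agreement on the closed-above model generated by $S^r$. Concretely, this is the algorithm in which each process sends its initial value for one round and then decides the minimum among the values sent by the $i$ processes with the smallest initial values that it hears.

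Second, I would verify that $r$ rounds of $A$ can only produce communication patterns that are single-round graphs of the closed-above model generated by $S^r$. An $r$-round execution of $A$ produces a graph $H_1 \bigotimes \cdots \bigotimes H_r$ with each $H_j \in \uparrow G_j$ for some $G_j \in S$. A straightforward induction on $r$ using Lemma~\ref{prodIncl} (applied to the partial product $G_1 \bigotimes \cdots \bigotimes G_{r-1}$ and to $G_r$, exploiting associativity of $\bigotimes$ on graphs with self-loops) yields $\bigotimes_{j=1}^r \uparrow G_j \subseteq \uparrow \bigotimes_{j=1}^r G_j$, which is by construction a subset of the closed-above model generated by $S^r$. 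Therefore the one-round algorithm from the first step, executed after each process has collected the pairs it hears during $r$ rounds of $A$, solves $(i + (n - \cov{i}{S^r}))$-set agreement in $r$ rounds on $A$.

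I do not foresee any real obstacle: the argument is essentially bookkeeping. The only step that could conceivably hide difficulty is the induction extending Lemma~\ref{prodIncl} from two factors to $r$ factors, but this is routine given associativity of the graph path product on graphs with self-loops. The proof is in fact almost verbatim the proof of the preceding theorem on $\eDom{S^r}$, with Theorem~\ref{upperCovOne} replacing Theorem~\ref{upperEdomOne} as the one-round ingredient.
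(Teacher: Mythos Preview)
Your proposal is correct and matches the paper's proof essentially verbatim: apply Theorem~\ref{upperCovOne} to the set $S^r$ to get solvability on $\bigcup_{G_1,\ldots,G_r \in S} \uparrow \bigotimes_{i=1}^r G_i$, then invoke Lemma~\ref{prodIncl} to transport this down to $\bigcup_{G_1,\ldots,G_r \in S} \bigotimes_{i=1}^r \uparrow G_i = A$. Your remark that Lemma~\ref{prodIncl} needs a routine induction to cover $r$ factors is a fair elaboration the paper leaves implicit.
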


    \begin{proof}
      We have that $\forall i \in [1, \eDom{S^r}[: (i+ (n- \cov{i}{S^r}))$-set
      agreement is solvable on
      $\bigcup\limits_{G_1,...,G_r \in S}
      \uparrow \bigotimes\limits_{i=1}^r G_i$ by
      Theorem~\ref{upperCovOne}. This then implies by Lemma~\ref{prodIncl}
      that it is solvable on
      $\bigcup\limits_{G_1,...,G_r \in S}
      \bigotimes\limits_{i=1}^r \uparrow G_i$, that is on $A$.
    \end{proof}

    One issue with these bounds is that they require the computation of
    possibly many products, as well as the computation of the combinatorial
    numbers for a lot of graphs. One alternative is to forsake the best bound
    we can get for one that can be computed using only the numbers for the initial
    graphs.

    This hinges on covering number sequences. Recall that
    the $i$-th covering number of a graph is the minimum number of processes hearing
    a set of $i$ processes that do not broadcast. In a sense, it gives the
    guaranty of propragation of information by a set of $i$ processes.

    That's the whole story for one round. But what happens when you do multiple
    rounds? Then, if the $i$-th covering number of the graph is greater than
    $i$, this means that in the next rounds, the minimum number of people
    who will hear the value of the $i$ initial processes is the $cov_i$-th covering
    number. And if this number is greater than $cov_i$, this repeats.

    Covering number sequences capture this process. One can also see them as
    the sequences of covering numbers for powers of the graph.

    \begin{definition}[Covering number sequences]
      Let $G$ be a graph. Then the \textbf{$i$-th
      covering numbers sequence} of $G$ $\triangleq
      (s^i_j)_{j \in \mathbb{N}^*}$ such that
      $s^i_1 = \cov{i}{G}$ and
      $\forall k \geq 1: s^i_{k+1} =
      \left(
      \begin{array}{ll}
        |\Pi| & \textit{if }s^i_k \geq \eDom{G}\\
        \cov{s^i_k}{G} & \textit{if }s^i_k < \eDom{G}\\
      \end{array}
      \right)$
    \end{definition}

    Armed with these sequences, we get an upper bound directly from $G$.

    \begin{theorem}[Upper bounds on $k$-set agreement by covering
            numbers sequences]
      \label{upperCovSeq}
      Let $A$ be a simple closed-above model defined by the graph $G$
      on $\Pi$. Then if the $i$-th covering
      sequence of $G$ reaches $n$ at some point, $i$-set agreement
      is solvable on the model $A$.
    \end{theorem}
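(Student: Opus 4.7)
The plan is to use the natural full-information style algorithm---at each round every process sends the set of (process, initial value) pairs it currently knows, and at round $r$ (the first index at which $s^i_r = n$) every process decides the minimum initial value appearing in its view---and then track, round by round, how many processes have received at least one of the $i$ smallest initial values. The central claim, which I would prove by induction on $k \in [1,r]$, is that after $k$ rounds at least $s^i_k$ processes hold in their view at least one of the $i$ smallest initial values. Once this is established the conclusion is immediate: at round $r$ every process holds one of those $i$ values, so the minimum it decides lies in a set of size at most $i$, solving $i$-set agreement.

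For the base case ($k=1$), let $P_i \subseteq \Pi$ denote the $|P_i|=i$ processes holding the $i$ smallest initial values and let $H \in \uparrow G$ be the actual communication graph for round $1$. Since adding edges only grows out-neighborhoods, one has $|\bigcup_{p \in P_i} \textit{Out}_H(p)| \geq |\bigcup_{p \in P_i} \textit{Out}_G(p)| \geq \cov{i}{G} = s^i_1$, which gives the desired count. For the inductive step, let $P$ be the set of processes holding one of the $i$ smallest values after round $k$, so $|P| \geq s^i_k$. I split on the two cases in the definition of $s^i_{k+1}$. If $s^i_k \geq \eDom{G}$, then any $\eDom{G}$-subset of $P$ dominates $G$, hence dominates any $H \in \uparrow G$ (edge addition preserves domination), so every process receives from $P$ in round $k+1$ and at least $n = s^i_{k+1}$ processes know one of the $i$ smallest values. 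If $s^i_k < \eDom{G}$, pick any subset $P' \subseteq P$ of size exactly $s^i_k$; then $|\bigcup_{p \in P'} \textit{Out}_H(p)| \geq \cov{s^i_k}{H} \geq \cov{s^i_k}{G} = s^i_{k+1}$, using the same monotonicity of covering numbers under edge addition.

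The only real delicacy is the monotonicity argument: the model's actual graph at each round can be any $H \in \uparrow G$, so one must invoke that both the domination and the covering numbers of $H$ are lower-bounded by those of $G$ (both follow from the fact that self-loops and the inclusion $\textit{Out}_G(p) \subseteq \textit{Out}_H(p)$ make the relevant unions of out-neighborhoods monotone). Everything else is bookkeeping with the recursive definition of $s^i_k$, so this is the main---and essentially only---step that needs to be argued explicitly rather than left as routine.
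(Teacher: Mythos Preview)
Your proposal is correct and follows essentially the same approach as the paper: both use the ``send everything known, then decide the minimum'' algorithm and prove by induction on the round index that at least $s^i_k$ processes have seen one of the $i$ smallest initial values, splitting on the two cases in the definition of the covering sequence. Your treatment is in fact slightly more careful than the paper's, since you explicitly handle the monotonicity step (the actual round-$k$ graph is some $H \in \uparrow G$, and $\textit{Out}_G(p) \subseteq \textit{Out}_H(p)$ implies the relevant covering and domination bounds transfer from $G$ to $H$), which the paper leaves implicit.
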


%
%

    We can adapt this bound for general closed-above models by generalizing
    the covering numbers sequences to a set of graphs.

    \begin{definition}[Covering numbers sequences for sets of graphs]
      Let $s$ be set of graphs. Then the \textbf{$i$-th
      covering numbers sequence} of $S$ $\triangleq
      (s_j)_{j \in \mathbb{N}^*}$ such that
      $s_1 = \min\limits_{G \in S} cov_i(G)$ and\\
      $\forall k \geq 1: s_{k+1} =
      \left(
      \begin{array}{ll}
        n & \textit{if }s_k \geq \max\limits_{G \in S} k_{eq-dom}(G)\\
        \min\limits_{G \in S} cov_{s_k}(G) & \textit{if }s_k <
        \max\limits_{G \in S} k_{eq-dom}(G)\\
      \end{array}
      \right)$
    \end{definition}

    \begin{theorem}[Upper bounds on $k$-set agreement by covering
                    numbers sequences for general closed-above models]
      Let $S$ be a set of graph on $\Pi$. Then if the $i$-th covering
      sequence of $S$ reaches $n$ at some point, $i$-set agreement
      is solvable on the oblivious closed-above model generated by $S$.
    \end{theorem}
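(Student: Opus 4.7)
The plan is to run the same natural algorithm as before---every process broadcasts its current view for $r$ rounds and then decides the minimum initial value it has seen, where $r$ is any index at which the $i$-th covering sequence of $S$ reaches $n$---and then prove by induction on the round number that the set of processes holding at least one of the $i$ smallest initial values grows as fast as the sequence $(s_k)$, no matter which adversarial sequence of graphs from $S$ is played.

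More precisely, fixing the $i$ processes $P$ with the smallest initial values, I would show by induction on $k \in [1,r]$ that after round $k$ at least $s_k$ processes hold a value from $P$. The base case follows immediately from the definition $s_1 = \min_{G \in S} \cov{i}{G}$: for any round-$1$ graph $G_1 \in S$, the senders in $P$ reach at least $\cov{i}{G_1} \geq s_1$ receivers. For the inductive step, let $K$ be the current carrier set, so $|K| \geq s_k$, and split on the two cases in the sequence's definition. When $s_k < \max_{G \in S} \eDom{G}$, pick any $s_k$-subset $K' \subseteq K$ and observe that for any round-$(k+1)$ graph $G \in S$, $|\bigcup_{p \in K'} Out_G(p)| \geq \cov{s_k}{G} \geq s_{k+1}$, while self-loops ensure $K' \subseteq \bigcup_{p \in K'} Out_G(p)$ so the carrier set still grows monotonically. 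When $s_k \geq \max_{G \in S} \eDom{G}$, any set of $s_k$ processes dominates every $G \in S$, so the carrier set becomes all of $\Pi$, matching $s_{k+1} = n$. Once $s_r = n$, every process has one of the $i$ smallest initial values in its view and decides one of at most $i$ distinct values.

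The main obstacle I anticipate is purely definitional bookkeeping at the case split: I need $\cov{s_k}{G}$ to yield a meaningful lower bound for every $G \in S$ when in the first case, which $s_k < \max_{G \in S} \eDom{G}$ ensures (graphs with $\eDom{G} \leq s_k$ simply contribute the trivial bound $n$), and I need the collapse to $n$ in the second case to be justified uniformly across graphs, which is precisely why the threshold uses $\max_{G \in S} \eDom{G}$ rather than $\min_{G \in S} \eDom{G}$. Everything else is the same argument as Theorem~\ref{upperCovSeq} with $\min_{G \in S}$ inserted in the right places to reflect the adversary's worst choice of graph at each round.
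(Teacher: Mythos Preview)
Your proposal is correct and takes essentially the same approach as the paper: the paper's proof of this theorem is a two-sentence sketch that asserts the sequence reaching $n$ means every set of $i$ processes is heard by everyone after $r$ rounds, then applies the min-decision rule. Your write-up simply spells out the induction on rounds that the paper leaves implicit here (and which the paper does spell out for the single-graph analogue, Theorem~\ref{upperCovSeq}), including the careful handling of the $\max_{G \in S}\eDom{G}$ threshold; nothing in the argument differs in substance.
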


    \begin{proof}
      If the $i$-th covering number sequence of $S$ reaches $n$ after step $r$,
      this means that every set of $i$ processes is heard by everyone
      after $r$ rounds. In particular, the $i$ processes with the smallest
      initial values will be heard by everyone.

      Hence sending all the values heard for now for $r$ rounds, and then
      deciding the smallest value received, ensures that one of the $i$-th
      smallest values will be chosen, and thus solves $i$-set agreement.
    \end{proof}

  \subsection{Lower bounds for multiple rounds}

    \begin{theorem}[Lower bound (multiple rounds) on $k$-set agreement for
      simple closed-above models]
      \label{lowerMultiSimple}
      Let $r > 0$ and let $A$ a simple closed-above model generated by the graph
      $G$.\\
      Then $(\dom{G}-1)$-set agreement is not solvable on $A$
      in $r$ rounds by an oblivious algorithm.
    \end{theorem}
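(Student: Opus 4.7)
The plan is to reduce the multi-round oblivious setting to the one-round lower bound for simple closed-above models, by exploiting the fact that oblivious algorithms strip away all information beyond the accumulated (process, initial value) pairs. Concretely, after $r$ rounds on $\uparrow G$ with the adversary playing $K_1, \ldots, K_r \in \uparrow G$, process $p$'s input to its decision map is exactly the set $\{(q, v_q) : q \in In_{K_1 \bigotimes \cdots \bigotimes K_r}(p)\}$, so an $r$-round oblivious algorithm is fully determined by a one-round-style decision map applied to the appropriate composed graph.

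The next step is to isolate a subfamily of executions whose interpreted complex is a pseudosphere with the structural properties invoked by the one-round lower bound. Following the preceding subsection, I fix the first $r-1$ rounds to play exactly $G$ and let only the last round range over $\uparrow G$. Because each edge added in the final round only perturbs the view of its destination independently of the others, the resulting interpreted complex is a pseudosphere containing the full simplex, in which every view dominates a smallest view corresponding to the minimum knowledge graph, matching the three hypotheses (union of pseudospheres, containing the full simplex, smallest graph contained in all) that the one-round lower bound actually uses.

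Given this pseudosphere, I invoke the one-round lower bound for simple closed-above models (from Casta{\~n}eda et al.) to deduce the impossibility of $(\dom{G}-1)$-set agreement on this subfamily of executions. Since any correct $r$-round oblivious algorithm must in particular be correct on this subfamily, and its behavior there is governed by the same pseudosphere structure as the one-round protocol complex, the impossibility transfers to the full $r$-round setting on $A$.

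The main obstacle is verifying that the pseudosphere extracted from this restricted family preserves the graph-theoretic parameter $\dom{G}$ rather than collapsing to a weaker quantity related to the product graph, and that the correspondence with the one-round protocol complex is precise enough to carry over the Casta{\~n}eda et al.\ bound verbatim. This rests on the careful structural decomposition of the preceding subsection: by only letting the last round's graph vary in $\uparrow G$, the dependence on $G$ survives the product construction, and Lemma~\ref{prodIncl} together with Lemma~\ref{capPseudo} ensures that the pseudosphere intersection properties of the underlying model are inherited by the subfamily.
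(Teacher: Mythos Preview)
Your approach is essentially the paper's: reduce via obliviousness to a one-round problem on product graphs, restrict to the subfamily $G^{r-1}\cdot\uparrow G$ (first $r-1$ rounds fixed to $G$, only the last round ranging over $\uparrow G$), observe that this subfamily yields a pseudosphere containing the full simplex and sitting inside $\uparrow G^r$, and then invoke the one-round lower bound for simple closed-above models.

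The one place you go beyond the paper is your final paragraph, where you try to argue that the parameter $\dom{G}$ (rather than $\dom{G^r}$) is preserved by the construction. Neither Lemma~\ref{prodIncl} (an inclusion pointing the wrong way for lower bounds) nor Lemma~\ref{capPseudo} supports that claim, and the paper does not make it either: its proof explicitly lands on the complex of $\uparrow G^r$, so the one-round bound it applies actually yields impossibility of $(\dom{G^r}-1)$-set agreement. Since $\dom{G^r}\leq\dom{G}$ in general and the matching multi-round upper bound in the paper is stated for $\dom{G^r}$, the $\dom{G}$ in the theorem statement appears to be a typo; do not attempt to justify the literal $\dom{G}$ version, as it is false whenever $G$ is strongly connected (for large enough $r$ the product $G^r$ becomes the clique and consensus is solvable).
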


    \begin{theorem}[Lower bound (multiple rounds) on $k$-set agreement for
      general closed-above models]
      \label{lowerGeneralMultiple}
      Let $r > 0$ and let $A$ be a closed-above model generated by
      the set of graphs $S$.\\
      Let $l =
      \min(\eDomOver{S^r}-2,\min \{t+M_t(S^r)-2 \mid t \in [1,\eDomOver{S^r}-1] \})$
      Then $(l+1)$-set agreement is not solvable on $A$
      in $r$ rounds by an oblivious algorithm.
    \end{theorem}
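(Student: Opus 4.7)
The plan is to reduce this multi-round statement to the single-round lower bound of Theorem~\ref{lowerGeneral}, applied to the set of products $S^r$. Because we restrict to oblivious algorithms, a process's final knowledge after $r$ rounds depends only on which initial values reached it through paths of length $r$ in the sequence of communication graphs; equivalently, its $r$-round view is the $1$-round view over the product $G_1 \bigotimes \cdots \bigotimes G_r$. Thus any oblivious $r$-round algorithm on $A$ induces, for each $r$-tuple $(G_1,\ldots,G_r) \in S^r$, a one-round behaviour dictated by the product, and it suffices to argue impossibility at the level of the uninterpreted complex built from these products.

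As warned in Section~\ref{sec:multiple}, $\bigcup_{G_1,\ldots,G_r \in S} \bigotimes_{i=1}^r \uparrow G_i$ is in general strictly smaller than the closure-above of the products, so the $r$-round protocol complex is \emph{not} the uninterpreted complex of any closed-above model generated by $S^r$, and Theorem~\ref{lowerGeneral} cannot be invoked verbatim. However, its proof does not need closure-above \emph{per se}; it only needs a union of pseudospheres, one per underlying graph, whose color set for each process $p$ ranges between $p$'s in-neighborhood in that underlying graph and all of $\Pi$. As Section~\ref{sec:multiple} indicates, such pseudospheres can be extracted from the $r$-round protocol complex by allowing extra edges only in the last round of the product: an edge added in the last-round graph modifies a single process's in-neighborhood in the product independently of the others, so varying these choices process-by-process traces out a pseudosphere whose $p$-color set is the upward closure of $In_{\bigotimes_{i=1}^r G_i}(p)$. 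Taking the union over all $r$-tuples in $S$ yields a subcomplex of the protocol complex with exactly the same combinatorial shape as the uninterpreted complex of a closed-above model generated by $S^r$.

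With this subcomplex in hand, I would replay the proof of Theorem~\ref{lowerGeneral} almost verbatim, substituting $S^r$ for $S$. The shellability-based argument using Lemma~\ref{lemma-shellability-connectivity} employs $\eDomOver{S^r}$ to bound how far pseudosphere intersections can be pushed, and $M_t(S^r)$ to control how many facets share a $(d-1)$-face at each shelling step; together they yield that the interpreted complex extracted from the subcomplex is at least $l$-connected. The standard connectivity-based impossibility for $k$-set agreement~\cite[Thm.~10.3.1]{HerlihyBook} then rules out $(l+1)$-set agreement by the oblivious algorithm, since solving it would require higher-dimensional disconnection incompatible with the $l$-connectivity of a subcomplex of its protocol complex.

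The main obstacle is making the pseudosphere extraction rigorous: one must verify that for each $r$-tuple the family of products obtained by extending only the last-round graph truly lives inside the oblivious $r$-round protocol complex, and has pseudosphere structure with color sets ranging exactly between $In_{\bigotimes_{i=1}^r G_i}(p)$ and $\Pi$ for each $p$. Once that is established, the parameters $\eDomOver{S^r}$ and $M_t(S^r)$ drive the same shelling and intersection arguments as in the one-round proof, so the connectivity bound and the final application of the impossibility theorem transfer without further adaptation.
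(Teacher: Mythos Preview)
Your proposal is correct and follows essentially the same route as the paper: reduce to one round via obliviousness, extract for each $r$-tuple $(G_1,\ldots,G_r)\in S^r$ the pseudosphere obtained by extending only the last-round graph (the paper writes this as $G_1\cdots G_{r-1}\cdot\uparrow G_r$), observe that the resulting union of pseudospheres behaves like the uninterpreted complex of the closed-above model generated by $S^r$, and then rerun the proof of Theorem~\ref{lowerGeneral} with $S^r$ in place of $S$. One small wording point: your final sentence suggests the impossibility comes from $l$-connectivity of a subcomplex of the protocol complex, but the cleaner way to phrase it (and the way the paper implicitly uses it) is that this subcomplex \emph{is} the protocol complex of a sub-model of $A$, so an algorithm solving $(l+1)$-set agreement on $A$ would solve it on the sub-model, contradicting the one-round bound there.
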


    As a concrete applications of these bounds, we consider a classical family
    of subgraphs: stars.

    \begin{definition}[Star graphs]
      Let $G$ be a graph. Then $G$ is a \textbf{star graph}
      $\triangleq \exists S \subseteq \Pi: G = (V,S \times \Pi)$.
    \end{definition}

    \begin{theorem}[Lower bound for stars]
      \label{lowerStar}
      Let $S$ be the set of graphs which are unions of $s$ stars with different
      centers. Then $n-s$-set agreement is
      not solvable in the closed-above model generated by $S$.
    \end{theorem}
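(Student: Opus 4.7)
The plan is to apply Theorem~\ref{lowerGeneralMultiple} to the set $S$, so the work boils down to computing $\eDomOver{S^r}$ and $\mCov{t}{S^r}$ and showing they match the values already extracted for the one-round symmetric star model discussed after Corollary~\ref{lowerSym}.

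The first step is to identify $S^r$. For any graph $G$ that is a union of $s$ stars with centers $C_G$, the out-neighborhood $Out_G(u)$ equals $\Pi$ when $u \in C_G$ and $\{u\}$ otherwise. Unfolding the path product definition shows that for any $G_1, \ldots, G_r \in S$, the graph $G_1 \otimes \cdots \otimes G_r$ is again a union of stars whose center set is exactly $C_{G_1} \cup \cdots \cup C_{G_r}$: a center of any factor retains its full out-neighborhood through the composition, while any vertex that is a non-center in every factor retains only its self-loop. Thus $S^r$ is precisely the set of unions of $k$ stars for $s \leq k \leq \min(rs, n)$.

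The crucial point is that the minimum number of stars in any graph of $S^r$ remains $s$ (achieved when all factor graphs share the same centers). So the same adversarial argument that yields $\eDomOver{S} = n-s+1$ in the one-round discussion applies verbatim to $S^r$: whenever $|P| \leq n-s$, the complement $\Pi \setminus P$ still has room to host the $s$ centers of some graph in $S^r$, whereas for $|P| \geq n-s+1$ every graph in $S^r$ is forced to place a center inside $P$. Hence $\eDomOver{S^r} = n-s+1$. Similarly, for any $t < \eDomOver{S^r}$ and any set $P$ of size $t$, a graph of $S^r$ whose centers all lie in $\Pi \setminus P$ collapses $Out_G(P)$ to $P$ itself, so $\mCov{t}{S^r} = t$ and therefore $M_t(S^r) = n - t$.

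Plugging these numbers into Theorem~\ref{lowerGeneralMultiple} gives $t + M_t(S^r) - 2 = n-2$ uniformly in $t$, so $l = \min(\eDomOver{S^r} - 2,\, n - 2) = \min(n-s-1,\, n-2) = n-s-1$ (using $s \geq 1$). Consequently $(l+1) = (n-s)$-set agreement is not solvable by any oblivious algorithm on the closed-above model generated by $S$. The main obstacle, I expect, is the product-structure step: carefully verifying that $S^r$ is exactly the family of unions of stars with the claimed range of center-set sizes, since this is what lets the one-round combinatorial computation transfer unchanged to $r$ rounds.
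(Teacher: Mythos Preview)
Your proposal is correct and follows essentially the same route as the paper: compute $\eDomOver{S^r}=n-s+1$ and $\mCov{t}{S^r}=t$ (hence $M_t(S^r)=n-t$), then plug into Theorem~\ref{lowerGeneralMultiple}. The only difference is cosmetic: you identify $S^r$ explicitly as the family of unions of $k$ stars with $s\le k\le\min(rs,n)$, whereas the paper argues more abstractly via idempotence of star graphs under the product (giving $S\subseteq S^r$) together with the observation that every $G\in S^r$ lies in $\uparrow H$ for some $H\in S$; both arguments yield the same combinatorial values.
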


\section{Conclusion}
\label{sec:conclu}

  We provided upper and lower bounds on $k$-set agreement for closed-above
  models, the subset of round-based models defined by subgraphs that
  must be present in the communication graph at each round.
  These models encompass many message-passing models of distributed computing
  focused on safety properties.

  Regarding the bounds themselves, although their proofs leverage
  combinatorial topology, all our bounds
  are expressed in terms of combinatorial numbers of the graphs. That is,
  these bounds can be used without any knowledge of combinatorial topology.
  Yet combinatorial topology was instrumental in showing such sweeping results.

\begin{acks}
  Adam Shimi was supported by the
  \grantsponsor{ANR}{Agence Nationale de Recherche}{https://anr.fr}
  under Grant No.:~\grantnum{ANR}{PARDI ANR-16-CE25-0006} and
  Armando Casta\~neda was supported by project PAPIIT IN108720.
\end{acks}

\bibliography{references}

\appendix

\section{Proof of Lemma~\ref{lemma-shellability-connectivity}}

  \begin{proof}
      We prove the claim by induction on $\ell$.
      \begin{itemize}
          \item \textbf{(Base case)} $\ell=0$.
              We need to prove that $\cB$ is $0$-connected,
              by induction on the length of a shelling order of $\cA$.
              Fix a shelling order $\phi_1, \ldots, \phi_m$ of $\cA$,
              so $\cB = \bigcup^m_{i=1} \alpha(\phi_i)$.
              \begin{itemize}
                  \item \textbf{(Base case)} $\cB=\alpha(\phi_1)$. By hypothesis
                  \textbf{(2)}, for the case $t=0$, $\cB=\alpha(\phi_1)$ is
                  at least $l-t = 0$-connected.
                  \item \textbf{(Induction step)} Suppose that
                      $\bigcup^{r-1}_{i=1} \alpha(\phi_i)$ is $0$-connected,
                      for some $2 \leq r < m$.
                      We have that $\alpha(\phi_r)$ is $0$-connected by hypothesis
                      \textbf{(2)}, as above.
                      We show that $\left( \bigcup^{r-1}_{i=1} \alpha(\phi_i)
                      \right) \cap \alpha(\phi_r)$ is $(-1)$-connected,
                      namely, non-empty, and then Corollary~\ref{simpleNerve} imply that
                      $\cB=\left(\bigcup^{r-1}_{i=1}
                      \alpha(\phi_i)\right)\cup\alpha(\phi_r)$ is $0$-connected.
                      By definition of shellability,
                      \[
                      \left( \bigcup^{r-1}_{i=1} \phi_i \right) \cap \phi_r =
                      \tau_1 \cup \ldots \cup \tau_s,
                      \]
                      where each $\tau_j$ is a face of dimension $(d-1)$ of $\phi_r$.
                      For each $\tau_j$ there is a facet $\sigma_j$ of
                      $\bigcup^{r-1}_{i=1} \phi_i$
                      such that $\tau_j \subset \sigma_j$.
                      Thus, $\phi_r$ and $\sigma_j$ share a $(d-1)$-face and
                      \[
                      \left( \bigcup^{r-1}_{i=1} \phi_i \right) \cap \phi_r =
                      \bigcup^{s}_{j=1} \left( \phi_r \cap \sigma_j \right).
                      \]
                      By hypothesis (1) we have that
                      \[
                      \left( \bigcup^{r-1}_{i=1} \alpha(\phi_i) \right)
                      \cap \alpha(\phi_r) = \bigcup^{s}_{j=1} \left( \alpha(\phi_r)
                      \cap \alpha(\sigma_j) \right).
                      \]
                      Each $\sigma_j$ shares a $(d-1)$-face with $\phi_r$,
                      so hypothesis~(2), with $t=1$, implies that
                      $\alpha(\phi_r) \cap \alpha(\sigma_j)$ is at least
                      $(-1)$-connected,
                      which implies that
                      $\left( \bigcup^{r-1}_{i=1} \alpha(\phi_i) \right)
                      \cap \alpha(\phi_r)$ is non-empty.
              \end{itemize}
          \item \textbf{(Induction step)} Suppose that
              the statement of the theorem holds for $\ell-1$,
              and consider a shelling order $\phi_1, \ldots, \phi_m$ of $\cA$.
              Our aim is to show that ${\cB} = \bigcup^m_{i=1} \alpha(\phi_i)$
              is $\ell$-connected.

              As in the base case, we proceed by induction on the
              length of the shelling order.
              \begin{itemize}
                  \item \textbf{(Base case)} $\cB=\alpha(\phi_1)$.
                  By hypothesis \textbf{(2)}, for the case $t=0$, $\cB=\alpha(\phi_1)$ is
                  at least $\ell-0 = \ell$-connected.
                  \item \textbf{(Induction step)} Suppose that
                      $\bigcup^{r-1}_{i=1} \alpha(\phi_i)$ is
                      $\ell$-connected, for some $2 \leq r < m$.

                      We have that $\alpha(\phi_r)$ is $\ell$-connected by hypothesis
                      \textbf{(2)} as above.
                      If we show that
                      $\left( \bigcup^{r-1}_{i=1} \alpha(\phi_i) \right)
                      \cap \alpha(\phi_r)$ is $(\ell-1)$-connected,
                      Corollary~\ref{simpleNerve} implies that
                      $\bigcup^r_{i=1} \alpha(\phi_i)$ is $\ell$-connected.

                      To do so, we use the theorem for $\ell-1$.
                      As seen before, there are facets
                      $\sigma_1, \ldots, \sigma_s$ of
                      $\bigcup^{r-1}_{i=1} \phi_i$ such that
                      each $\sigma_j$ and $\phi_r$ share a $(d-1)$-face,
                      \[\left(
                      \bigcup^{r-1}_{i=1} \phi_i \right) \cap \phi_r =
                      \bigcup^{s}_{j=1} \left( \phi_r \cap \sigma_j
                      \right) \; \mbox{and} \;\; \left(
                      \bigcup^{r-1}_{i=1} \alpha(\phi_i) \right) \cap
                      \alpha(\phi_r) =
                      \bigcup^{s}_{j=1} \left( \alpha(\phi_r) \cap
                      \alpha(\sigma_j) \right).
                      \]

                      Let $\cB' = \bigcup^{s}_{j=1} \left( \alpha(\phi_r)
                      \cap \alpha(\sigma_j) \right)$.
                      Let $\lambda_1, \hdots, \lambda_{s'}$ be simplexes
                      among the $\sigma_j$'s such that
                      $\alpha(\phi_r)
                      \cap \alpha(\lambda_i) \neq \alpha(\phi_r) \cap
                      \alpha(\lambda_{i'})$ for $i\neq i'$,
                      and the
                      $\alpha(\phi_r) \cap \alpha(\lambda_i)$ still
                      form a cover of $\cB'$: $\cB' = \bigcup^{s'}_{i=1}
                      \left( \alpha(\phi_r) \cap \alpha(\lambda_i) \right)$.
                      Let $\cA' = \bigcup^{s'}_{i=1}
                      \left( \phi_r \cap \lambda_i \right)$.
                      Note that $\cA'$ is pure of dimension $d-1$ and
                      is a subcomplex of the boundary complex of $\phi_r$.

                      By Lemma~\ref{lemma-shelling-boundary},
                      $\cA'$ is shellable.
                      The facets of $\cA'$ are the intersections
                      $\phi_r \cap \lambda_i$.
                      Consider the bijection $\beta(\phi_r \cap \lambda_i) =
                      \alpha(\phi_r) \cap \alpha(\lambda_i)$
                      between the facets of $\cA'$ and our cover of $\cB'$.
                      Now, let $\phi_r \cap \lambda$ be any facet of $\cA'$
                      and $\bigcup^{m'}_{i=1} \left( \phi_r \cap \lambda'_i \right)$
                      be any pure $(d-1)$-subcomplex of $\cA'$.
                      Note that every pair of facets of $\cA'$ share
                      a $(d-2)$-face as both are $(d-1)$-faces of $\phi_r$.
                      Then, $\phi_r \cap \lambda$ and
                      each $\phi_r \cap \lambda'_i$ share a
                      face of dimension $d-2$, and thus we can write

                      $$\left( \bigcup^{m'}_{i=1} \left( \phi_r \cap
                      \lambda'_i \right) \right) \cap \left( \phi_r \cap
                      \lambda \right) = \bigcup^{m'}_{i=1}
                      \left( (\phi_r \cap \lambda) \cap
                      (\phi_r \cap \lambda'_i) \right)$$
                      and
                      $$\left( \bigcup^{m'}_{i=1} \beta(\phi_r \cap \lambda'_i)
                      \right) \cap \beta(\phi_r \cap \lambda) =
                      \bigcup^{m'}_{i=1} \\	\left( \beta(\phi_r \cap \lambda)
                      \cap \beta(\phi_r \cap \lambda'_i) \right).$$
                      We conclude that hypothesis (1) of the theorem
                      holds for $\cA'$, $\cB'$ and $\beta$.

                      Finally, consider any collection
                      $\phi_r \cap \lambda'_0, \ldots,
                      \phi_r \cap \lambda'_{t'}$ of $t'+1$ facets of $\cA'$.
                      As already noted, each of them and
                      the first one share a $(d-2)$-face.
                      We have that
                      $$\tau = \bigcap^{t'}_{i=0} \beta(\phi_r \cap \lambda'_i)
                      = \bigcap^{t'}_{i=0} (\alpha(\phi_r) \cap
                      \alpha(\lambda'_i)) = \alpha(\phi_r) \cap
                      \bigcap^{t'}_{i=0} \alpha(\lambda'_i).$$
                      As said above, the $\lambda'_i$'s are facets of
                      $\cA$ and each of them and $\phi_r$ share a $(d-1)$-face.
                      By hypothesis (2) with $t=t'+1$, $\tau$ is
                      of at least $(\ell - t) =
                      (\ell - (t'+1)) = ((\ell - 1) - t')$-connected.
                      Then, hypothesis (2) of the theorem holds
                      for $\cA'$, $\cB'$, $\beta$ and $\ell-1$.

                      We have all hypothesis to use the theorem
                      with $\cA'$ and $\cB'$ and $\ell-1$.
                      Therefore, $\cB'$ is $(\ell-1)$-connected,
                      and then $\cup^{r}_{i=1} \alpha(\phi_i)$
                      is $\ell$-connected.
              \end{itemize}
      \end{itemize}
  \end{proof}

\section{Proof of Theorem~\ref{lowerGeneral}}

  \begin{proof}
    It is known that when the protocol complex is $k$ connected, non trivial
    $k+1$-set agreement is impossible. Herlihy et al.~\cite{HerlihyBook} gives
    an example derivation for colorless protocols,
    and Casta{\~n}eda et al.~\cite{Sirocco} give one fore colored protocols.
    We thus prove that the protocol complex generated by $A$ after one round is
    $l$-connected.

    As we said, we want to apply Lemma~\ref{lemma-shellability-connectivity}.
    Our $\cA$ is the pseudosphere $\Psi(\Pi,[0,k])$, our $\cB$
    is $C_A(\cA)$ and our mapping $\alpha$ sends a facet $\sigma$
    of $\cA$ on $C_A(\sigma)=\bigcup\limits_{G \in S} C_G(\sigma)$.
    \begin{itemize}
      \item Let $\phi'$ be a facet of $\cA$ and take a
        pure $d$-subcomplex $\bigcup^t_{i=1} \phi_i
        \subseteq \cA$ satisfying that
        $\left( \bigcup^t_{i=1} \phi_i \right) \cap \phi' =
        \bigcup^s_{i=1} \left( \sigma_i \cap \phi' \right)$
        for some of $\cA$'s facets $\sigma_1, \hdots, \sigma_s$,
        with each $\sigma_i$ and $\phi'$ sharing a $(d-1)$-face.

        We want to show that
        $\left( \bigcup^t_{i=1} \alpha(\phi_i) \right)
        \cap \alpha(\phi') = \bigcup^s_{i=1}
        \left( \alpha(\sigma_i) \cap \alpha(\phi') \right)$.
        \begin{itemize}
          \item Let $\tau$ be a simplex of
            $\left( \bigcup^t_{i=1} \alpha(\phi_i) \right)
            \cap \alpha(\phi')$.
            Since $\left( \bigcup^t_{i=1} \alpha(\phi_i) \right)
            \cap \alpha(\phi') =\\
            \bigcup^t_{i=1}
            \left( \alpha(\phi_i) \cap \alpha(\phi') \right)$
            by distributivity of intersection on union,
            we have some $i \in [1,t]$ such that $\tau$ is
            a simplex of $\alpha(\phi_i) \cap \alpha(\phi')$.

            Now, $\alpha$ sends a simplex $\sigma$ to $C_A(\sigma)$;
            thus $\alpha(\phi') \cap \alpha(\phi_i) =
            C_A(\phi') \cap C_A(\phi_i) =
            (\bigcup\limits_{G \in S} C_G(\phi')) \cap
            (\bigcup\limits_{G \in S} C_G(\phi_i)) =
            \bigcup\limits_{G,H \in S} C_G(\phi') \cap C_H(\sigma_i)$.
            Hence $\tau$ being a simplex of
            $\alpha(\phi_i) \cap \alpha(\phi')$ means that $\exists G,H \in S$
            for which every process of $\tau$ has its view in both
            $C_G(\phi_i)$ and $C_H(\phi')$. And a view is completely
            defined by the value received from other processes.
            That is, $\forall (p,v) \in \tau, \forall (q,v_q) \in v:
            (q,v_q) \in \phi_i \cap \phi'$.

            By our equation
            $\left( \bigcup^t_{i=1} \phi_i \right) \cap \phi' =
            \bigcup^s_{i=1} \left( \sigma_i \cap \phi' \right)$
            there is $l \in [1,s]$ such that
            $\phi_i \cap \phi' \subseteq \sigma_l \cap \phi'$.
            Then all $(q,v_q) \in v$ are also in
            $\sigma_l \cap \phi'$.
            We conclude that all $(q,v_q) \in v$
            are in $\sigma_l$ and in $\phi'$,
            and thus $\tau$ is a simplex of
            $\alpha(\sigma_l) \cap \alpha(\phi')$.
          \item Let $\tau$ be a simplex of
            $\bigcup^s_{i=1} \left( \alpha(\sigma_i)
            \cap \alpha(\phi') \right)$. Thus there is
            some $i \in [1,s]$ such that $\tau$ is a simplex of
            $\alpha(\sigma_i) \cap \alpha(\phi')$.
            That is, $\forall (p,v) \in \tau, \forall (q,v_q) \in v:
            (q,v_q) \in \sigma_i \cap \phi'$.

            Then, by our equation
            $\left( \bigcup^t_{i=1} \phi_i \right) \cap \phi' =
            \bigcup^s_{i=1} \left( \sigma_i \cap \phi' \right)$,
            there is $l \in [1,t]$ such that $\forall (p,v) \in \tau,
            \forall (q,v_q) \in v: (q,v_q) \in \phi_l$.
            We conclude that all $(q,v_q) \in v$ are in $\phi_l$
            and in $\phi'$, and thus $\tau$ is a simplex of
            $\left( \bigcup^t_{i=1} \alpha(\phi_i) \right)
            \cap \alpha(\phi')$.
        \end{itemize}
      \item Now we want to study the connectivity of the
        intersection of well-chosen facets.
        Let $t\geq 0$ and $\phi_0, \phi_1, \ldots, \phi_t$ be
        $t+1$ facets of $\mathcal{A}$ with each $\phi_i$ and $\phi_0$
        sharing a $(d-1)$-face.
        We want to prove that $\bigcap^t_{i=0} \alpha(\phi_i)$ is $l-t$
        connected.

        Because $l \leq \eDomOver{S} - 2$, we only have to consider
        $t < \eDomOver{S}$, because $l-\eDomOver{S} \leq -2$, and thus
        in the case $t \geq \eDomOver{S}$, there is no constraint
        to satisfy on the connectivity of the intersection.

        Now, each $\alpha(\phi_i)$ is in fact $C_A(\phi_i)
        = \bigcup\limits_{G \in S} C_G(\phi_i)$.
        We start by developping the $C_A(\phi_i)$ into the
        union of the $C_G(\phi_i)$ and applying the distributivity
        of intersection over union on this big intersection:
        \[
        \begin{array}{ll}
          \bigcap\limits_{i \in [0,t]} C_A(\phi_i) & =
            \bigcap\limits_{i \in [0,t]}
            \bigcup\limits_{G \in S} C_G(\phi_i)\\
          & = \bigcup\limits_{G_0,G_1,...,G_t \in S}
            \bigcap\limits_{i \in [0,t]}
              C_{G_i}(\phi_i)\\
        \end{array}
        \]
        As always, we naturally get a cover of our space.
        We thus use the Nerve Lemma.

        This requires first a computation of connectivity for the
        $\bigcap^t_{i=0} C_{G_i}(\phi_i)$. We are taking
        the intersection of pseudospheres, which gives a new
        pseudosphere by Lemma~\ref{capPseudo}. To compute its
        connectivity, we need to now how much processes end up
        with a non-empty set of view, by Lemma~\ref{connPseudo}.

        Let us assume that the $\phi_i$ are all distinct; if not we
        can remove the duplicate and start with a lower $t$.
        Then, because they all intersect with $\phi_0$ on
        a $(d-1)$ face, we have $\bigcap^t_{i=0} \phi_i$ of
        dim $(d-t)$. That is, in these input simplexes, there
        are $(d-t)$ processes with the same input value across
        all $\phi_i$. Or equivalently, there are $t$ processes
        with different values for some $\phi_i$.

        Let $P$ be the set of $t$ processes with sometimes
        different initial values across the $\phi_i$. Then
        the processes disappearing from
        $\bigcap^t_{i=0} C_{G_i}(\phi_i)$ are the ones
        receiving the values from $P$.

        But for $t < \eDomOver{S}$,
        we know either all processes receive the values from $P$,
        or at most $\mCov{t}{S}$ do.
        Thus $\bigcap^t_{i=0} C_{G_i}(\phi_i)$ is either empty or
        a pseudosphere with $(n-\mCov{t}{S})$ processes with a non empty
        set of views.
        It is therefore empty or $(n-\mCov{t}{S}-2)$-connected
        by Lemma~\ref{connPseudo}.

        Let us index the subsets of $S$ of size $t$ whose intersection
        is non-empty.
        Then let $J$ be a set of index of size $\leq M_t(S)$. We now
        show that $\bigcap\limits_{j \in J}
        \bigcap\limits_{\substack{i \in [0,t] \\G_i \in S_j}} C_{G_i}(\phi_i)$
        is either empty or $(M_t(S)-2 - |J| + 1) = (M_t(S) - |J| -1)$-connected.

        If $P$ dominates any set $S_j$ for $j \in J$, then the intersection
        for this set is empty, and thus the intersection of the intersections
        of $J$ is also empty. Let us thus assume from this point that $P$
        does not dominate any set $S_j$ with $j \in J$.

        Then in each $G \in S_j$, $P$ talks to at most \mCov{t}{S} processes. Of
        these, there are\\
        $\mCov{t}{S} - t$ who are not in $P$. This
        means that in the worst case, $P$ talks to $|J|(\mCov{t}{S} - t)$
        processes not in $P$. Thus in this worst case, the number
        of processes hearing the views of P is $t + |J|(\mCov{t}{S}-t)$.

        Therefore, the intersection is $(n-t-|J|(\mCov{t}{S}-t)-2)$-connected.

        First we treat the case where $\mCov{t}{S} = t$, and thus where
        $M_t(S) = n-t$. We have an intersection that is $(n-t-2)$-connected, and
        $n-t -2 \geq n-t - |J| -1$, since $|J| \geq 1$.
        This actually holds for any J, even when $|J| > M_t(S)$. Hence the
        nerve complex of our cover in this case is a simplex,
        and thus $\infty$-connected. We conclude by
        the nerve lemma~\ref{nerveLemma} that $\bigcap^t_{i=0} \alpha(\phi_i)$
        is $(M_t(S) - 2)$-connected.

        Now let us turn to the case where $\mCov{t}{S} > t$.
        Notice that $n-t-|J|(\mCov{t}{S}-t)-2 = (n-t-1) - |J|(\mCov{t}{S}-t)-1
        \geq (\mCov{t}{S}-t)(M_t(S)-|J|) - 1$. And since $M_t(S) \geq |J|$ and
        $\mCov{t}{S} > t$, we have $n-t-|J|(\mCov{t}{S}-t)-2 \geq M_t(S)-|J|-1$.
        Among other things, this means that $\forall J$ such that $|J| = M_t(S)$,
        $\bigcap\limits_{j \in J}
        \bigcap\limits_{\substack{i \in [0,t] \\G_i \in S_j}} C_{G_i}(\phi_i)$
        is $(-1)$-connected and thus not empty.

        This implies that the nerve complex contains the $M_t(S)$ skeleton
        of a higher dimensional simplex. And such a skeleton is at least
        $(M_t(S)-1)$-connected. We conclude by the nerve lemma~\ref{nerveLemma}
        that $\bigcap^t_{i=0} \alpha(\phi_i)$ is $(M_t(S)-2)$-connected.

        For $t < \eDomOver{S}$, we thus have that
        $\bigcap^t_{i=0} \alpha(\phi_i)$ is
        $(M_t(S)-2)$-connected. And since
        $(t + M_t(S)-2) \geq l$, we have $M_t(S)-2 \geq l-t$, and thus
        we are $(l-t)$-connected.
    \end{itemize}
  \end{proof}

\section{Proof of Corollary~\ref{lowerSym}}

  \begin{proof}
    We simply compute $M^t(Sym(G))$ from $M_t(\{G\})$.
    First, notice that if $\mCov{t}{\{G\}} = t$, then no other process hears
    any set of $t$ processes. This is invariant by permutation, and thus
    $\mCov{t}{S} = t$, and $M_t(Sym(G)) = n-t$.

    We now turn to the case $\mCov{t}{\{G\}} = t$.
    In the worst case, we have a $P \subseteq \Pi$ of size $t$ that hits
    \mCov{t}{\{G\}} processes in $G$. Of these, $\mCov{t}{\{G\}} - t$ processes
    are not in $P$. By permutation, we can take, in the worst case,
    $t-1$ other graphs where these $\mCov{t}{\{G\}}-t$ are completely different.
    That is, in the
    worst case, $P$ touches $\mCov{t}{Sym(G)} = t + t(\mCov{t}{\{G\}}-t)$ processes. And
    thus $M_t(Sym(G) = \left\lfloor \frac{n-t-1}{\mCov{t}{Sym(G)}-t} \right\rfloor
    = \left\lfloor \frac{n-t-1}{t(\mCov{t}{G}-t)} \right\rfloor$.
  \end{proof}

\section{Proof of Theorem~\ref{upperCovSeq}}

  \begin{proof}
    Let $S$ be the set of the $k$ smallest
    initial values. There are at least $k$
    processes with one of these values.
    Now the $k$-th covering sequence
    of $G$ gives us a lower bound on
    the number of processes
    who know these values for each round.
    We show this by induction on the index $j$
    of the sequence elements.
    \begin{itemize}
      \item \textbf{(Base case)} $j = 1$.
        Then after one round, all processes
        who heard from our $k$ initial processes
        know their initial values; this number is lower
        bounded by $cov_k(G)$.
      \item \textbf{(Induction step)} $j = t+1$ and
        the result holds $\forall j \leq t$. Notably,
        $s_t^k$ lower bounds the number of processes
        knowing one of the $k$ initial values after
        $t$ rounds.
        \begin{itemize}
          \item if $s_t \geq \eDom{G}$ then whatever
            the set of processes knowing one of the $k$
            values after round $t$, they form a dominating
            set. And thus every process will know at least
            on of these values after round $t+1$, corresponding
            to $s_{t+1}^k = n$.
          \item If $s_t^k < \eDom{G}$, then not all sets
            of size $s_t^k$ are dominating sets. But they
            all reach at least $cov_{s_t^k}(G)$ processes.
            Thus after round $t+1$, at least that much
            processes known at least one of the $k$
            initial values.
        \end{itemize}
    \end{itemize}

    Hence, if the $k$-th covering sequence reaches
    $n$ after say round $t$, all processes know at least
    one initial value from every set of $k$ processes. Notably,
    every process knows one of the $k$ smallest initial values,
    and thus choosing the smallest value ensure that
    $k$-set agreement is solved.

    Thus the algorithm where every one send their view for $t$
    rounds, and then decide the minimum value they know, solves
    $k$-set agreement on $G$.
  \end{proof}

\section{Proof of Theorem~\ref{lowerMultiSimple}}

  \begin{proof}
    Because we only consider oblivious algorithms, we can consider the graphs
    given by the products of $r$ graphs of $A$ as generating a model, and apply
    our bound for one round. The trick is to know if $(\uparrow G)^r$ is itself
    a simple closed-above model. One would think that probably
    $(\uparrow G)^r = \uparrow G^r$, but this is not true in general, as shown
    in the examples at the beginning of the section.

    On the other hand, $(\uparrow G)^r$ contains a subcomplex which has
    all the properties that we use in the proof of our lower bound:
    $G^{r-1} . \uparrow G$.
    \begin{itemize}
      \item It is a pseudosphere. Indeed, such a complex contains
        the uninterpreted complex of $G^{r-1} . G = G^r$; and each
        edge added to $G$ only changes the view of
        one process in the product, the one receiving the
        new message.

        Hence we can change the view of each process independantly
        of the others (because we only add messages to the
        last graph, and thus no two such messages can interfere
        with each other to create a new path).
      \item It contains the full simplex. This follows from the
        fact that $\uparrow G$ contains the clique, and our
        product operation maintain this graph.
      \item It is included in the complex $\uparrow G^r$. This follows from the
        fact that adding messages to $G$ can only add messages to
        the product, and thus all graphs considered contain
        $G^r$ and thus are in $\uparrow G^r$.
    \end{itemize}

    This means that this subcomplex can be treated just like
    the complex of $\uparrow G^r$ in our lower bound for one round. The
    theorem follows.
  \end{proof}

\section{Proof of Theorem~\ref{lowerGeneralMultiple}}

  \begin{proof}
    The idea is the same that for the proof above, except that for
    each product of r graphs $G_1 .G_2 . \dots . G_r$, we consider
    the complex of the graphs in $G_1 . G_2 . \dots . G_{r-1} . \uparrow G_r$.

    This satisfies the same three properties used in our lower bound proof
    than $\uparrow (G_1 .G_2 . \dots . G_r)$, and thus our lower bound
    gives the same result.

    This means we can consider the union of our complexes like the
    the union of the complexes for $\uparrow (G_1 .G_2 . \dots . G_r)$ for
    each product of $r$ graphs of $S$, that is like the complex of the
    general closed-above model generated by $S^n$.
  \end{proof}

\section{Proof of Theorem~\ref{lowerStar}}

  \begin{proof}
    First, we have $\forall r > 0 : \eDomOver{S^r} = \eDomOver{S} = n-s+1$.

    The first equality follows from the fact that any star graph is idempotent
    by ouf product; hence $S \subseteq S^r$, and a set of $n-s$ procs in the
    $n-s$ graphs with stars at the other $s$ procs does not dominate these
    graphs. This actually gives $\eDomOver{S^r} \geq \eDomOver{S}$; the other
    direction follows form the fact that $\forall G \in S^r, \exists H \in S:
    G \in \uparrow H$. Hence every set of $i$ graphs in $S^r$ will have more
    edges than some set of $i$ graphs in $S$, and thus will be easier to dominate.

    As for the second equality, it follows from the fact that with $n-s+1$ processes
    and $n-s+1$ distinct graphs, at least one of the process is the center of the
    star in one of the graph, and thus the processes dominates the graphs.

    On the other hand, $\forall t \in [1,\eDomOver{S}-1]:
    t + M^t(S^r)-2=n-2$. This equality comes from the fact
    that any set of $t$ processes that does not dominate a given set of $t$ graphs of
    $S^r$ is distinct from the stars of these $t$ graphs. Thus they are silent, and
    then $\mCov{t}{S^r} = t$ and $M^t(S^r) = n-t$. We thus have
    $t + M^t(S^r)-2 = t+n-t-2 = n-2$.

    Therefore
    $\min(\eDomOver{S^r}-2,\min \{t+M_t(S^r)-2 \mid t \in [1,\eDomOver{S^r}-1] \})
    = \min(n-s+1-2,n-2) = n-s-1$. We conclude by Theorem~\ref{lowerGeneralMultiple}.
  \end{proof}

\end{document}